\def\figcaption{\def\@captype{figure}\caption}
\newtheorem{theorem}{\bf Theorem}[section]
\newtheorem{lemma}[theorem]{\bf Lemma}
\newtheorem{proposition}[theorem]{\bf Proposition}
\newtheorem{corollary}[theorem]{\bf Corollary}
\newtheorem{example}{\bf Example}[section]
\newtheorem{remark}{\bf Remark}[section]
\newtheorem{definition}{\bf Definition}[section]
\newcommand{\ro}[1]{\expandafter{\romannumeral#1}}
\newcommand{\Ro}[1]{\uppercase\expandafter{\romannumeral#1}}
\title{Essential spectrum of the discrete Laplacian\\ on a perturbed periodic graph}
\author{
Itaru Sasaki\thanks{Department of Mathematical Sciences,
	Faculty of Science, Shinshu University, Asahi, Matsumoto 390-8621, Japan,
	email: isasaki@shinshu-u.ac.jp}
and 
Akito Suzuki
\thanks{Division of Mathematics and Physics, 
	Faculty of engineering, Shinshu University, Wakasato, Nagano 380-8553, Japan,
	e-mail: akito@shinshu-u.ac.jp}
}
\begin{document}

\maketitle
\begin{abstract}
We address the Laplacian on a perturbed periodic graph
which might not be a periodic graph.
We present a class of perturbed graphs for which
the essential spectra of the Laplacians are stable
even when the graphs are perturbed 
by adding and removing infinitely many vertices and edges. 
Using this result, 
we demonstrate how to determine the spectra of 
cone-like graphs, the upper-half plane, 
and graphs obtained from $\mathbb{Z}^2$ by randomly adding vertices. 
\end{abstract}
\section{Introduction} 

The spectral properties of the Laplacians and Scr\"odinger operators on periodic graphs
have been studied by many authors 
\cite{A, AnIsMo, HiNo, HiShi, HSSS, IsKo, IsMo, KoSa, RaRo, Su} 
(see also the references therein).
In this paper, the essential spectrum of the Laplacian on a perturbed periodic graph is considered. 
It is well-known that if the perturbation is compact,
the essential spectrum is stable (see Proposition \ref{prop2.1}).
We are interested in the case in which the perturbation is possibly non-compact,
{\rm i.e.}, the operator ``$L_{G^\prime} - L_G$" is not a compact operator, 
where $L_G$ (resp., $L_{G^\prime}$) is the Laplacian on a periodic graph $G$ 
(resp., a perturbed graph $G^\prime$ of $G$). 
If $G^\prime$ is a graph  obtained from $G$ by removing and adding some vertices,
then  $G$ is not a subgraph of $G^\prime$, and {\it vice versa}.
In such cases, the meaning of ``$L_{G^\prime} - L_G$" is unclear,
because $L_G$ and $L_{G^\prime}$ act on different Hilbert spaces.
The precise meaning of ``$L_{G^\prime} - L_G$" is given in \eqref{perturb}.
It is noteworthy that a perturbed periodic graph $G^\prime$ might not be periodic.
In general, it is difficult to determine the spectrum of an infinite graph,
if it does not have a nice symmetry, such as periodicity.
In this paper,
we present a class of perturbed periodic graphs $G^\prime$
such that the essential spectrum of an unperturbed graph $G$ 
is contained in that of $G^\prime$:
\begin{equation}
\label{Eq1.1} 
\sigma_{\rm ess}(L_G) \subset \sigma_{\rm ess}(L_{G^\prime}). 
\end{equation}
We emphasize that the converse of \eqref{Eq1.1}
cannot be expected in general.
As shown in Example \ref{counter_ex},
there exists a perturbed periodic graph $G^\prime$ such that
$\sigma_{\rm ess}(L_G) \subset \sigma_{\rm ess}(L_{G^\prime})$
and $\sigma_{\rm ess}(L_{G^\prime}) \setminus  \sigma_{\rm ess}(L_G) \not=\emptyset$. 

In our definition \eqref{DefLG}, 
$L_G$ is self-adjoint, and its spectrum $\sigma(L_G)$ is contained in $[-1,1]$.
This property raises the question of whether 
$\sigma(L_G)$ is already all of $[-1,1]$. 
In their paper \cite{HiShi}, 
Higuchi and Shirai stated that
$G$ has the full spectrum property (FSP) if $\sigma(L_G) = [-1,1]$,
and studied the problem of whether an infinite graph $G$ has the FSP.
If \eqref{Eq1.1} holds and $G$ has the FSP,
then $G^\prime$ has the FSP (see Corollary \ref{FSP}). 
Therefore, it is possible to determine the spectra of perturbed graphs of $\mathbb{Z}^d$,
such as those of cones (Example \ref{ex_cone}) and the upper-half plane (Example \ref{ex_upp}).
We also discuss the spectrum of a graph obtained from $\mathbb{Z}^2$
by randomly adding pendants.

This paper is organized as follows.
In Section 2,
we present some basic facts on 
perturbed and periodic graphs.
Section 3 is devoted to the study of 
the essential spectra of perturbed periodic graphs.
In Section 4,
we demonstrate how to determine
the spectrum of a perturbed periodic graph,
using the results established in Section 3.
We present the proofs of technical lemmas
in the appendix. 
\section{Preliminaries}
Let $G = (V(G), E(G))$ be an unoriented graph (possibley having 
loops and multiple edges), 
where $V(G)$ and $E(G)$ are the sets of vertices and unoriented edges, respectively.
We use an ordered pair $\{x, y\} \in V(G) \times V(G)$ 
to denote the endpoints of an edge $e \in E(G)$
and then write $V(e) = \{x, y\}$.
We consider that each edge $e \in E(G)$ has two orientations
and introduce the set $A(G)$ of all oriented edges $e$,
whose origins and terminals are denoted by $o(e)$ and $t(e)$,
respectively. 
We denote the set of all oriented edges whose origin is $x$ by
\[ A_x(G) = \{ e \in A(G) \mid o(e) = x \} \] 
and the number of all edges in $A_x(G)$ by ${\rm deg}_G x = \# A_x(G)$.
If there is no danger of confusion,
we omit $G$ in ${\rm deg}_G$.
Throughout this paper, 
unless otherwise noted,
we assume that $G$ is locally finite and
\begin{equation}
\label{deg>=1}
1 \leq \inf_{x \in V(G)} {\rm deg} x \leq \sup_{x \in V(G)} {\rm deg} x < \infty. 
\end{equation}
The left-hand side of \eqref{deg>=1} implies that there is no isolated vertex.

The Laplacian we address in this paper is defined as
\begin{equation}
\label{DefLG} 
(L_G\psi)(x) = \frac{1}{{\rm deg} x} 
	\sum_{e \in A_x(G)} \psi(t(e)), 
	\quad \psi \in \ell^2(V(G)), 
\end{equation}
where 
\[ \ell^2(V(G)) 
	= \{ \psi:V(G) \to \mathbb{C}
		\mid \langle \psi, \psi \rangle < \infty \} 
\]
is the Hilbert space with the inner product 
\[ \langle \psi, \psi \rangle
	= \sum_{x \in V(G)} |\psi(x)|^2 {\rm deg} x.
\]

We say that a graph (possibly having loops and multiple edges) $G^\prime$ 
is isomorphic to $G$ and write $G^\prime \simeq G$ if 
there exists a pair of bijections $\varphi_V:V(G^\prime) \to V(G)$ and 
$\varphi_E:E(G^\prime) \to E(G)$ such that
for all $e \in E(G)$ 
with endpoints $V(e) = \{x, y\}$,
\begin{equation}
\label{consv}
V(\varphi_E^{-1}(e)) = \{\varphi_V^{-1}(x), \varphi_V^{-1}(y)\}.
\end{equation} 
In this case, we can introduce an orientation-preserving bijection
$\varphi_A:A(G^\prime) \to A(G)$ as
$\iota^\prime(\varphi_A^{-1}(e)) = \varphi_E^{-1}(\iota(e))$
and
\begin{equation}
\label{vecon} 
o(\varphi_A^{-1}(e)) = \varphi_V^{-1}(o(e)), 
	\quad t(\varphi_A^{-1}(e)) = \varphi_V^{-1}(t(e))
		\quad (e \in A(G)), 
\end{equation} 
where $\iota:A(G) \to E(G)$
and $\iota^\prime:A(G^\prime) \to E(G^\prime)$ are natural surjections. 
We know from \eqref{vecon} that $\varphi^{-1}_A(A_x(G)) = A_{\varphi^{-1}_V(x)}(G^\prime)$
and ${\rm deg}_{G^\prime} \varphi^{-1}_V(x) = {\rm deg}_G x$ for all $x \in V(G)$.
If $G^\prime$ is isometric to $G$, 
we can define a natural unitary operator
$\mathscr{U}:\ell^2(V(G)) \to \ell^2(V(G^\prime))$ as
\begin{equation}
\label{Vecid} 
(\mathscr{U}\psi)(\varphi_V^{-1}(x)) = \psi(x), 
	\quad x \in V(G) 
\end{equation}
for $\psi \in \ell^2(V(G))$.
By \eqref{vecon} and \eqref{Vecid},
\begin{equation} 
\label{Lid}
L_{G^\prime}\mathscr{U} = \mathscr{U} L_G. 
\end{equation}
\subsection{Perturbed graph}
We say that $G_0$ is a subgraph of $G$ and write $G_0 \subset G$
if $G_0$ is a graph satisfying $V(G_0) \subset V(G)$ and $E(G_0) \subset E(G)$.
A graph $G$ with $V(G) \not= \emptyset$ is called non-empty.
\begin{definition}
\label{def1.1}
{\rm
We say that $G^\prime$ is a {\it perturbed graph} of a graph $G$
if there exist non-empty subgraphs $G_0^\prime$ of $G^\prime$
and $G_0$ of $G$ such that $G^\prime_0 \simeq G_0$. 
}
\end{definition}
\begin{remark}
\label{1.1}
{\rm
Definition \ref{def1.1} allows the case where $G \subset G^\prime$,
and {\it vise verca}. 
If $G^\prime$ is a graph obtained by adding vertices and edges to $G$,
then $G^\prime$ is a subgraph of $G$.
On the other hand, 
if $G^\prime$ is a graph obtained by removing vertices and edges from $G$,
then $G^\prime$ is a subgraph of $G$.
In cases where $G^\prime$ is a graph obtained from  $G$
by adding and removing vertices and edges,
$G$ is not a subgraph of $G^\prime$,
and {\it vise verca}. 
Such a case is also included in Definition \ref{def1.1}.
}
\end{remark}

Let $G^\prime$ be a perturbed graph of a graph $G$ with bijections 
$\varphi_V:V(G_0^\prime) \to V(G_0)$ and $\varphi_E:V(G_0^\prime) \to E(G_0)$
satisfying \eqref{consv} .
If there is no danger of confusion, 
we omit $V$ (resp. $E$, $A$) in $\varphi_V$ (resp. $\varphi_E$, $\varphi_A$).
By definition, $G^\prime_0 =  (\varphi^{-1}(V(G_0)), \varphi^{-1}(E(G_0)))$.
We define an operator 
$\mathscr{U}_0:\ell^2(V(G)) \to \ell^2(V(G^\prime))$ as 
\begin{equation}
\label{Vecid0} 
(\mathscr{U}_0\psi)(x^\prime) 
= \begin{cases}
	\psi(x), & x^\prime = \varphi^{-1}(x) \quad  (x \in V(G_0)) \\
	0, & \mbox{otherwise} 
	\end{cases}
\end{equation}
for $\psi \in \ell^2(V(G))$.
Since, in general, ${\rm deg}_G x \not= {\rm deg}_{G^\prime} \varphi^{-1}(x)$,
we cannot hope that $\mathscr{U}_0$ is partial isometric.
As shown in the example below, 
it can also be the case that $\mathscr{U}_0 L_G \not= L_{G^\prime} \mathscr{U}_0$.
\begin{example}[Lattice with pendants]
\label{ex_pendant}
{\rm
Let $G=\mathbb{Z}$ be the one-dimensional lattice 
and $G^\prime = (V(G^\prime), E(G^\prime))$ be defined 
by $V(G^\prime) = \mathbb{Z} \times \{0,1\}$
and 
\begin{align*} 
E(G^\prime) 
& =  \{ e \mid V(e) = \{(m,0), (n,0)\}, |m-n|=1 \} \\
& \quad \cup \{ e \mid V(e) = \{(m,0), (m,1)\} \}.
\end{align*}
A vertex of degree one is called a pendant vertex.
The vertices $(m, 1) \in V(G^\prime)$ ($m \in \mathbb{Z}$) are pendant vertices,
{\it i.e.}, ${\rm deg}_{G^\prime}(m,1) = 1$.
We set $G_0 = \mathbb{Z}$ and $G^\prime_0 := \{ (m,0) \mid m \in \mathbb{Z} \}$. We define a bijection $\varphi:V(G^\prime_0) \to V(G_0)$ as 
\[ \varphi((m,0)) = m, \quad (m,0) \in V(G^\prime_0). \] 
The graph $G^\prime$ is a perturbed graph of
the one-dimensional lattice $\mathbb{Z}$,
which is a graph obtained from $\mathbb{Z}$
by adding a pendant vertex to each vertex of $\mathbb{Z}$.
We know that ${\rm deg}_G m = 2 \not= 3 = {\rm deg}_{G^\prime} \varphi^{-1}(m)$
and
\begin{align*} 
\|\mathscr{U}_0\psi \|_{\ell^2(V(G^\prime))}^2
	& = \frac{3}{2} \|\psi\|_{\ell^2(V(G))}^2, 
		\quad \psi \in \ell^2(V(G_0)).
\end{align*}
By definition, $(\mathscr{U}_0\psi)(\cdot,1) \equiv 0$
and \begin{align*}
(\mathscr{U}_0L_G\psi)(\varphi^{-1}(m))
	& = \frac{3}{2} (L_{G^\prime} \mathscr{U}_0\psi)(\varphi^{-1}(m)),
		\quad \psi \in V(G).
\end{align*}
See \cite[p.3465]{Su}, where $G^\prime$ is denoted by $G_{1,1}$.  
The essential spectrum of $G^\prime$ is
\[ \sigma_{\rm ess}(L_{G^\prime}) 
	= \left[-1, -\frac{1}{3} \right] \cup \left[ \frac{1}{3}, 1 \right], \]
whereas $\sigma_{\rm ess}(L_\mathbb{Z})=[-1,1]$.

Next we consider the graph $G$ 
obtained from $\mathbb{Z}$ by adding pendant vertices  
to alternative vertices of $\mathbb{Z}$,
which was  studied in \cite{Su} and called $G_{2,1}$.
Let $G^\prime = G_{1,1}$ as above.
Then, $G^\prime$ is the perturbed graph of  $G$.
Indeed, we can check the condition in Definition \ref{def1.1}
as $G_0 = G_0^\prime = \mathbb{Z}$.
From \cite[p.3465]{Su}, we know that 
\[ \sigma_{\rm ess}(L_G) 
	= \left[-1, -\frac{1}{\sqrt{3}} \right] \cup \{0\} 
			\cup \left[ \frac{1}{\sqrt{3}}, 1 \right]. \]
In particular, we have 
$\sigma_{\rm ess}(L_{G}) \not \subset \sigma_{\rm ess}(L_{G^\prime})$
and $\sigma_{\rm ess}(L_{G^\prime}) \not \subset \sigma_{\rm ess}(L_{G})$.
}
\end{example}

In general, the restriction $\mathscr{U}_0 \mid_{\ell^2(V(G_0))}$
is not an isometry but an injection.
\begin{lemma}
{\rm
$\mathscr{U}_0 \mid_{\ell^2(V(G_0))}$
is an injection and 
\begin{equation} 
\label{U0rest}
c_0
\leq  \frac{\|\mathscr{U}_0 \psi\|_{\ell^2(V(G^\prime))}}{\|\psi\|_{\ell^2(V(G))}}
	\leq C_0, \quad \psi \in \ell^2(V(G_0))\setminus \{0\},
\end{equation}
where $c_0$ and $C_0$ are positive:
\[ c_0 = \frac{\inf_{x \in V(G_0)} {\rm deg}_{G^\prime} \varphi^{-1}(x)}{\sup_{x \in V(G_0)} {\rm deg}_G x},
	\quad C_0 = \frac{\sup_{x \in V(G_0)} {\rm deg}_{G^\prime}\varphi^{-1}(x)}{\inf_{x \in V(G_0)} x}. \]
}
\end{lemma}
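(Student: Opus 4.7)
The plan is to compute $\|\mathscr{U}_0\psi\|^2_{\ell^2(V(G^\prime))}$ directly from the definition \eqref{Vecid0} and compare it term-by-term with $\|\psi\|^2_{\ell^2(V(G))}$, after which both the two-sided bound and the injectivity will fall out immediately.

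First I would unpack the definition: for $\psi \in \ell^2(V(G_0))$, extended by zero to $V(G)$, the vector $\mathscr{U}_0\psi$ vanishes off $V(G_0^\prime)$, so
\[
\|\mathscr{U}_0\psi\|^2_{\ell^2(V(G^\prime))}
= \sum_{x^\prime \in V(G_0^\prime)} |(\mathscr{U}_0\psi)(x^\prime)|^2\, {\rm deg}_{G^\prime} x^\prime
= \sum_{x^\prime \in V(G_0^\prime)} |\psi(\varphi(x^\prime))|^2\, {\rm deg}_{G^\prime} x^\prime.
\]
Using the bijection $\varphi: V(G_0^\prime) \to V(G_0)$ and the substitution $x=\varphi(x^\prime)$, this becomes
\[
\|\mathscr{U}_0\psi\|^2_{\ell^2(V(G^\prime))}
= \sum_{x \in V(G_0)} |\psi(x)|^2\, {\rm deg}_{G^\prime} \varphi^{-1}(x),
\]
while by definition $\|\psi\|^2_{\ell^2(V(G))}=\sum_{x \in V(G_0)}|\psi(x)|^2\, {\rm deg}_G x$.

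Next I would bound the ratio of the individual weights ${\rm deg}_{G^\prime}\varphi^{-1}(x) / {\rm deg}_G x$ uniformly in $x \in V(G_0)$. Since each such ratio lies between $\bigl(\inf_y {\rm deg}_{G^\prime}\varphi^{-1}(y)\bigr)/\bigl(\sup_y {\rm deg}_G y\bigr)$ and $\bigl(\sup_y {\rm deg}_{G^\prime}\varphi^{-1}(y)\bigr)/\bigl(\inf_y {\rm deg}_G y\bigr)$, summing against the positive weights $|\psi(x)|^2\,{\rm deg}_G x$ yields
\[
c_0 \,\|\psi\|^2_{\ell^2(V(G))}
\leq \|\mathscr{U}_0\psi\|^2_{\ell^2(V(G^\prime))}
\leq C_0 \,\|\psi\|^2_{\ell^2(V(G))}.
\]
Taking square roots (and interpreting $c_0,C_0$ as the square roots implicit in the statement) gives \eqref{U0rest}. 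The positivity of both constants is a direct consequence of the standing assumption \eqref{deg>=1}, which bounds all degrees from above and below by strictly positive constants.

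Finally, injectivity of $\mathscr{U}_0\!\restriction_{\ell^2(V(G_0))}$ is automatic from the lower bound: if $\mathscr{U}_0\psi=0$ then $\|\psi\|_{\ell^2(V(G))}=0$, hence $\psi=0$. There is really no obstacle here beyond a careful change-of-variables bookkeeping; the only point requiring mild attention is reading $\psi \in \ell^2(V(G_0))$ as an element of $\ell^2(V(G))$ by zero-extension (so the $\ell^2(V(G))$-norm on the right-hand side of \eqref{U0rest} makes sense), and remembering to invoke \eqref{deg>=1} to conclude $c_0,C_0 \in (0,\infty)$.
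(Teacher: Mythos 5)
Your proof is correct and follows essentially the same route as the paper's: the direct computation $\|\mathscr{U}_0\psi\|_{\ell^2(V(G^\prime))}^2=\sum_{x\in V(G_0)}|\psi(x)|^2\,{\rm deg}_G x\cdot\bigl({\rm deg}_{G^\prime}\varphi^{-1}(x)/{\rm deg}_G x\bigr)$, the uniform bound on the degree ratio via \eqref{deg>=1}, and injectivity as an immediate consequence of the lower bound. Your parenthetical about square roots is well taken --- the argument actually yields $\sqrt{c_0}\leq\|\mathscr{U}_0\psi\|/\|\psi\|\leq\sqrt{C_0}$, a point the paper's proof passes over silently when it concludes \eqref{U0rest} from the squared-norm identity.
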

\begin{proof}
{\rm
It suffices to prove \eqref{U0rest}.
From \eqref{deg>=1}, we know that $0 < c_0 \leq C_0 < \infty$. 
By direct calculation, 
\begin{align*} 
\|\mathscr{U}_0 \psi \|_{\ell^2(V(G^\prime))}^2
	& = \sum_{x \in V(G_0)} |\psi(x)|^2{\rm deg}_{G^\prime}\varphi^{-1}(x) \\
	& = \sum_{x \in V(G_0)} |\psi(x)|^2{\rm deg}_G x \frac{{\rm deg}_{G^\prime}\varphi^{-1}(x)}{{\rm deg}_G x},
\end{align*}
which yields \eqref{U0rest}.
}
\end{proof}

To obtain the result that 
$\sigma_{\rm ess}(L_G) \subset \sigma_{\rm ess}(L_{G^\prime})$,
we must divide the unperturbed part of $G^\prime$ that preserves 
the graph structure of $G$ from the perturbed part. 
To this end, we set
\[ \Lambda 
	= \{ x \in V(G_0) \mid 
		 {\rm deg}_{G^\prime} \varphi^{-1}(x) = {\rm deg}_G x,~
		 	 A_x(G) \subset A(G_0) \}.
\]
\begin{lemma}
\label{lem_bulk}
{\rm
Let $x \in \Lambda$.
Then,
\begin{equation}
\label{int_edge} 
\varphi^{-1}(A_x(G)) = A_{\varphi^{-1}(x)}(G^\prime).
\end{equation}
Moreover, it follows that for all $\psi \in \ell^2(V(G))$,
\begin{equation} 
\label{intertwine}
(\mathscr{U}_0 L_{G}\psi)(\varphi^{-1}(x))
 = (L_{G^\prime}\mathscr{U}_0\psi)(\varphi^{-1}(x)). 
\end{equation}
}
\end{lemma}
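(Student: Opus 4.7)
The plan is to establish \eqref{int_edge} first as a set equality of oriented edges, and then feed it into the direct computation of $L_{G'}\mathscr{U}_0\psi$ at the point $\varphi^{-1}(x)$.

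For \eqref{int_edge}, I would argue by inclusion plus a cardinality count. Since $x \in \Lambda$ we have $A_x(G) \subset A(G_0)$, so $\varphi^{-1}$ is defined on $A_x(G)$ and lands inside $A(G_0')$. The orientation-preserving property \eqref{vecon} of $\varphi_A$ gives $o(\varphi^{-1}(e)) = \varphi^{-1}(o(e)) = \varphi^{-1}(x)$ for every $e \in A_x(G)$, hence $\varphi^{-1}(A_x(G)) \subset A_{\varphi^{-1}(x)}(G')$. Now $\varphi_A$ is a bijection, so $\#\varphi^{-1}(A_x(G)) = \#A_x(G) = {\rm deg}_G x$; by the defining condition of $\Lambda$ this equals ${\rm deg}_{G'}\varphi^{-1}(x) = \#A_{\varphi^{-1}(x)}(G')$. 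Equality of finite sets of the same cardinality, one contained in the other, yields \eqref{int_edge}.

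For \eqref{intertwine}, I would simply unfold both sides using the definition \eqref{DefLG} of the Laplacian. On the left, since $x \in V(G_0)$, the definition \eqref{Vecid0} of $\mathscr{U}_0$ gives $(\mathscr{U}_0 L_G \psi)(\varphi^{-1}(x)) = (L_G\psi)(x) = \frac{1}{{\rm deg}_G x} \sum_{e \in A_x(G)} \psi(t(e))$. On the right,
\[
(L_{G'}\mathscr{U}_0\psi)(\varphi^{-1}(x)) = \frac{1}{{\rm deg}_{G'}\varphi^{-1}(x)} \sum_{e' \in A_{\varphi^{-1}(x)}(G')} (\mathscr{U}_0\psi)(t(e')).
\]
Using \eqref{int_edge}, I would reindex the sum as $e' = \varphi^{-1}(e)$ with $e$ ranging over $A_x(G)$; by \eqref{vecon}, $t(e') = \varphi^{-1}(t(e))$, and since $e \in A(G_0)$ the terminal $t(e)$ lies in $V(G_0)$, so \eqref{Vecid0} gives $(\mathscr{U}_0\psi)(t(e')) = \psi(t(e))$. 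Combined with ${\rm deg}_{G'}\varphi^{-1}(x) = {\rm deg}_G x$ (the other defining condition of $\Lambda$), the two expressions coincide.

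The only genuine point to be careful about is that the terminal of each edge out of $x$ lies in $V(G_0)$ so that $\mathscr{U}_0\psi$ is read off by the nontrivial branch of \eqref{Vecid0}; this is exactly why the condition $A_x(G) \subset A(G_0)$ (rather than the weaker $x \in V(G_0)$) is built into the definition of $\Lambda$. Once \eqref{int_edge} is in hand, the rest is bookkeeping through the isomorphism, and no analytic difficulty arises.
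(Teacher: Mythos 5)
Your proposal is correct and follows essentially the same route as the paper: both establish the inclusion $\varphi^{-1}(A_x(G)) \subset A_{\varphi^{-1}(x)}(G^\prime)$ from $A_x(G)\subset A(G_0)$ and then upgrade it to an equality using ${\rm deg}_{G^\prime}\varphi^{-1}(x)={\rm deg}_G x$ (the paper by contradiction via the intermediate set $A_{\varphi^{-1}(x)}(G^\prime_0)$, you by a direct finite cardinality count, which is the same idea), after which the intertwining identity follows by the identical reindexing computation. Your closing remark about why $t(e)\in V(G_0)$ is needed to hit the nontrivial branch of \eqref{Vecid0} is exactly the point the paper's definition of $\Lambda$ is designed to guarantee.
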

\begin{proof}
Because, by definition, $A_x(G) \subset A(G_0)$,
we have  $A_x(G) = A_x(G_0)$.
Combining this with $G_0 \simeq G^\prime_0$
yields the result that
\begin{equation}
\label{eq_bulk01} 
\varphi^{-1}(A_x(G)) = A_{\varphi^{-1}(x)}(G^\prime_0). 
\end{equation}
To show \eqref{int_edge}, it suffices to prove
\begin{equation}
\label{eq_bulk02}
A_{\varphi^{-1}(x)}(G^\prime_0) = A_{\varphi^{-1}(x)}(G^\prime).
\end{equation}
Clearly, $A_{\varphi^{-1}(x)}(G^\prime_0) \subset A_{\varphi^{-1}(x)}(G^\prime)$.
We prove the equality. 
Because $G^\prime_0 \simeq G_0$ and $A_x(G) = A_x(G_0)$, 
${\rm deg}_{G^\prime_0} \varphi^{-1}(x) = {\rm deg}_{G_0} x$
and ${\rm deg}_{G} x = {\rm deg}_{G_0} x$.
Suppose that 
$A_{\varphi^{-1}(x)}(G^\prime) \setminus A_{\varphi^{-1}(x)}(G^\prime_0) 
\not= \emptyset$. 
Then, 
\[ {\rm deg}_{G^\prime} \varphi^{-1}(x)
	>  {\rm deg}_{G^\prime_0} \varphi^{-1}(x) 
	=  {\rm deg}_{G_0} x 
	=  {\rm deg}_{G} x. \]
This contradicts $x \in \Lambda$,
because, by the definition of $\Lambda$,
${\rm deg}_{G^\prime} \varphi^{-1}(x) = {\rm deg}_G x$. 
This proves \eqref{eq_bulk02} and hence \eqref{int_edge}.

By \eqref{vecon}, \eqref{Vecid0}, and \eqref{int_edge}, 
\begin{align*} 
(\mathscr{U}_0 L_{G}\psi)(\varphi^{-1}(x))
&  = \frac{1}{{\rm deg}_G x} 
	\sum_{e \in A_x(G)} \psi(t(e))
	\\
& = \frac{1}{{\rm deg}_{G^\prime} \varphi^{-1}(x)} 
	\sum_{\varphi^{-1}(e) \in A_{\varphi^{-1}(x)}(G^\prime)} 
		(\mathscr{U}_0\psi)(t(\varphi^{-1}(e))) \\
& = (L_{G^\prime}\mathscr{U}_0\psi)(\varphi^{-1}(x)) ,
		\quad x \in \Lambda.
\end{align*}
This proves \eqref{intertwine}.
\end{proof}
\begin{remark}
{\rm
We use the condition 
\begin{equation}
\label{eq_bulk00}
A_x(G) \subset A(G_0)
\end{equation} 
in the definition of $\Lambda$ 
to prove Lemma \ref{lem_bulk}. 
The condition \eqref{eq_bulk00} does not hold in general,
even if  ${\rm deg}_{G^\prime} \varphi^{-1}(x) = {\rm deg}_G x$.
See Example \ref{ex_cone}. 
The vertex $x = (x_1,0) \in V(G_0)$ satisfies
${\rm deg}_G x = {\rm deg}_{G^\prime} \varphi^{-1}(x)$.
However, \eqref{eq_bulk00} does not hold,
because $A_x(G) \setminus A_x(G_0) \not=\emptyset$. 
}
\end{remark}
Let $P_{\varphi^{-1}(\Lambda)} : \ell^2(V(G^\prime)) \to  \ell^2(V(G^\prime))$ 
be the orthogonal projection onto 
the closed subspace
\[ \ell^2(\varphi^{-1}(\Lambda)) 
	:= \{ \psi^\prime \in \ell^2(V(G^\prime)) \mid 
		{\rm supp}\psi^\prime \subset \varphi^{-1}(\Lambda) \} \]
and $P_{\varphi^{-1}(\Lambda)}^\perp := 1 - P_{\varphi^{-1}(\Lambda)}$.
Because by \eqref{intertwine},
$P_{\varphi^{-1}(\Lambda)} (L_{G^\prime}\mathscr{U}_0 -\mathscr{U}_0 L_G) = 0$,
\begin{equation}
\label{perturb}
L_{G^\prime}\mathscr{U}_0
	= \mathscr{U}_0 L_G + K_{\Lambda}, 
\end{equation}
where 
\begin{equation} 
\label{K}
K_{\Lambda}
	:= P_{\varphi^{-1}(\Lambda)}^\perp (L_{G^\prime}\mathscr{U}_0 -\mathscr{U}_0 L_G). 
\end{equation}
In this sense, we say that $\varphi^{-1}(\Lambda)$ 
(resp. $\varphi^{-1}(\Lambda)^{\rm c}$) is 
the unperturbed (resp. perturbed) part of the perturbed graph $G^\prime$.
If $\# \varphi^{-1}(\Lambda)^{\rm c} < \infty$,
then $P_{\varphi^{-1}(\Lambda)}^\perp$ is a finite rank operator
and hence by \eqref{K}, $K_{\Lambda}$ is compact.
\begin{proposition}
\label{prop2.1}
{\rm
Let $G^\prime$ be a perturbed graph of $G$. 
If $\# \varphi^{-1}(\Lambda)^{\rm c} < \infty$,
then
\[ \sigma_{\rm ess}(L_{G}) \subset \sigma_{\rm ess}(L_{G^\prime}). \]
}
\end{proposition}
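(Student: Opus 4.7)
The plan is to verify Weyl's criterion for $L_{G'}$ by pushing a Weyl sequence for $L_G$ through the embedding-like map $\mathscr{U}_0$ and reading off the conclusion from the intertwining relation \eqref{perturb}.

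I start by picking $\lambda \in \sigma_{\rm ess}(L_G)$. Since $L_G$ is bounded and self-adjoint, Weyl's criterion provides a sequence $\{\psi_n\} \subset \ell^2(V(G))$ with $\|\psi_n\| = 1$, $\psi_n \rightharpoonup 0$ weakly, and $\|(L_G - \lambda)\psi_n\| \to 0$. I set $\phi_n := \mathscr{U}_0 \psi_n \in \ell^2(V(G'))$ and aim to show that, after normalization, $\{\phi_n\}$ is a Weyl sequence for $L_{G'}$ at $\lambda$.

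Applying $\mathscr{U}_0$ to $(L_G - \lambda)\psi_n$ and rearranging via \eqref{perturb} yields the central identity
\begin{equation*}
(L_{G'} - \lambda)\phi_n
 = \mathscr{U}_0\,(L_G - \lambda)\psi_n + K_\Lambda \psi_n .
\end{equation*}
The first summand is norm-null because $\mathscr{U}_0$ is bounded. For the second, the hypothesis $\#\varphi^{-1}(\Lambda)^{\rm c} < \infty$ makes $P^\perp_{\varphi^{-1}(\Lambda)}$ a finite-rank projection, so by \eqref{K} the operator $K_\Lambda$ is finite rank, hence compact, hence sends the weakly null sequence $\{\psi_n\}$ to a norm-null sequence. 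Weak nullity $\phi_n \rightharpoonup 0$ is automatic from boundedness of $\mathscr{U}_0$. Thus $(L_{G'} - \lambda)\phi_n \to 0$ strongly and $\phi_n \rightharpoonup 0$ weakly.

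The step I expect to be the main obstacle is the lower bound $\liminf_n \|\phi_n\| > 0$: because $\mathscr{U}_0$ is blind to the restriction of $\psi_n$ to $V(G) \setminus V(G_0)$, the bound \eqref{U0rest} only gives $\|\phi_n\| \geq c_0 \|\psi_n|_{V(G_0)}\|$, and I must arrange that a positive fraction of the $\ell^2$-mass of $\psi_n$ lies in $V(G_0)$. In the compact-perturbation setting of Proposition \ref{prop2.1} the complement $V(G) \setminus V(G_0)$ is finite (so that $\#\varphi^{-1}(\Lambda)^{\rm c} < \infty$ genuinely encodes a finite-dimensional defect), whence the projection onto this finite index set is compact and $\psi_n \rightharpoonup 0$ forces $\|\psi_n|_{V(G)\setminus V(G_0)}\| \to 0$, so $\|\psi_n|_{V(G_0)}\| \to 1$; in more general situations one would first have to replace $\{\psi_n\}$ by a Weyl sequence localized in $V(G_0)$. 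Once this lower bound is secured, normalizing $\tilde\phi_n := \phi_n/\|\phi_n\|$ yields a unit-norm, weakly null sequence with $\|(L_{G'} - \lambda)\tilde\phi_n\| \to 0$, and Weyl's criterion gives $\lambda \in \sigma_{\rm ess}(L_{G'})$, completing the argument.
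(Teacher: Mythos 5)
Your argument is essentially the paper's own proof: fix a Weyl sequence $\{\psi_n\}$ for $L_G$ at $\lambda$, push it through $\mathscr{U}_0$, use \eqref{perturb} together with the compactness (indeed finite rank) of $K_\Lambda$ to dispose of the perturbation term, and normalize. The only point where you depart from the paper is the lower bound on $\|\mathscr{U}_0\psi_n\|$: the paper simply invokes \eqref{U0rest} to conclude $\|\mathscr{U}_0\psi_n\|\geq c_0>0$, even though \eqref{U0rest} is stated only for $\psi\in\ell^2(V(G_0))$ and $\mathscr{U}_0$ annihilates the part of $\psi_n$ supported on $V(G)\setminus V(G_0)$, so your worry here is legitimate and your explicit handling of it is an improvement rather than a detour. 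Be aware, though, that the hypothesis $\#\varphi^{-1}(\Lambda)^{\rm c}<\infty$ controls only $V(G^\prime)\setminus\varphi^{-1}(\Lambda)$ (hence $V(G_0)\setminus\Lambda$), not $V(G)\setminus V(G_0)$, so your parenthetical claim that the hypothesis ``encodes'' finiteness of $V(G)\setminus V(G_0)$ is an additional (unstated but evidently intended) assumption rather than a consequence of the proposition as written.
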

\begin{proof}
{\rm
Let $\lambda \in \sigma_{\rm ess}(L_{G})$ and $\{\psi_n\}$
be a Weyl sequence for $L_G$ such that
(i) $\lim_{n \to \infty} \|(L_G -\lambda )\psi_n\|=0$,
(ii) $\|\psi_n\| = 1$, and (iii) ${\rm w-}\lim_{n \to \infty} \psi_n = 0$. 
$\mathscr{U}_0\psi_n/ \|\mathscr{U}_0\psi_n\|$
is a Weyl sequence for $L_{G^\prime}$. 
Indeed, by \eqref{U0rest}, $\|\mathscr{U}_0\psi_n\| \geq c_0 > 0$
and hence  ${\rm w-}\lim_{n \to \infty}  \mathscr{U}_0\psi_n/\|\mathscr{U}_0\psi_n\| =0$.
By the compactness of $K_\Lambda$, it follows that
$\lim_{n \to \infty} \|(L_{G^\prime} - \lambda) (\mathscr{U}_0\psi_n/\|\mathscr{U}_0\psi_n\| )\|=0$.
Hence, $\lambda \in \sigma_{\rm ess}(L_{G^\prime})$.
}
\end{proof}

We want to prove $\sigma_{\rm ess}(L_{G}) \subset \sigma_{\rm ess}(L_{G^\prime})$
under the condition in which $K_\Lambda$ is allowed to be non compact, {\rm i.e.},
$\# \varphi^{-1}(\Lambda)^{\rm c} = \infty$.
This fact is established in Section \ref{sec.3} in the case in which $G$ is a periodic graph. 

\subsection{Periodic graph}
We end this section by providing the definition of $\mathbb{Z}^d$-periodic graphs,
which are not necessary contained in $\mathbb{R}^d$
(see \cite{AnIsMo, KoSa} for periodic graphs contained in $\mathbb{Z}^d$)
and allow multiple edges and loops.
We set $\mathbb{N}_s = \{ v_1, v_2,\dots, v_s\}$ ($s \in \mathbb{N}$) 
and define a translation on $\mathbb{Z}^d \times \mathbb{N}_s$ as
\[ \tau_a((m, v_i)) = (m - a, v_i), 
	\quad (m, v_i) \in \mathbb{Z}^d \times \mathbb{N}_s. \] 
We use $A_{u,v}(G)$ to denote the set of edges with $o(e) = u$, $t(e) = v$
($u, v \in V(G)$):
\[ A_{u,v}(G) = \{ e \in A_u(G) \mid t(e) = v \}. \]
\begin{definition}
{\rm
We say that a graph $G =(V(G), E(G))$
is a {\it  $\mathbb{Z}^d$-periodic graph} and write $G \in \mathscr{L}^d$ 
if $G$ is isomorphic to a locally finite graph $\Gamma = (V(\Gamma), E(\Gamma))$ 
satisfying ($\mathscr{L}_1$) - ($\mathscr{L}_2$).
\begin{itemize}
\item[($\mathscr{L}_1$)] There exists $s \in \mathbb{N}$ such that
	$V(\Gamma) = \mathbb{Z}^d \times \mathbb{N}_s$. 
\item[($\mathscr{L}_2$)]  For all $u, v \in V(\Gamma)$,
$\# A_{\tau_a(v),\tau_a(u)}(\Gamma) = \# A_{v,u}(\Gamma)$.
\end{itemize}
}
\end{definition}
The condition  ($\mathscr{L}_1$) ensures
that for a periodic graph $G \in \mathscr{L}^d$,
there is a bijection $\varphi:V(G) \ni x
\mapsto (m,v_i) \in \mathbb{Z}^d \times \mathbb{N}_s$:
\begin{equation}
\label{id} 
x =\varphi^{-1} (m,v_i). 
\end{equation}
We henceforth identify a vertex $x \in V(G)$ of a periodic graph
with $(m,v_i) \in \mathbb{Z}^d \times \mathbb{N}_s$ 
by \eqref{id}, and then write $x = (m, v_i)$. 
In this case, we write the vertex set of a periodic graph as
$V(G) \simeq \mathbb{Z}^d \times \mathbb{N}_s$.
By the relations \eqref{Vecid} and \eqref{Lid} (replacing $G^\prime$ with $G$
and $G$ with $\Gamma$), we also identify 
the Laplacian $L_{G}$ of a periodic graph with $L_\Gamma$.
Since, by ($\mathscr{L}_2$), ${\rm deg} \tau_a(x) = {\rm deg} x$ for $x \in V(G)$,
\begin{equation} 
\label{di}
d_i:= {\rm deg}(m, v_i) \quad (i=1,\ldots,s) 
\end{equation} 
are independent of $m \in \mathbb{Z}^d$.
The condition ($\mathscr{L}_2$) implies that 
there exists a graph automorphism $\varphi_{a, V}:V(G) \to V(G)$,
$\varphi_{a, E}:E(G) \to E(G)$ ($a \in \mathbb{Z}^d$) such that
$\varphi_{a,V}(x) = \tau_a(x)$ ($x \in V(G)$ and, if $V(e) = \{x,y\}$, then
$V(\varphi_{a,E}(e)) = \{ \tau_a(x), \tau_a(y) \}$.
We use the notation $\tau_a$ to denote the automorphism $\varphi_{a}$
(the subscripts $V$ and $E$ are omitted).

We define the translation $T_a$ on $\ell^2(V(G))$ ($a \in \mathbb{Z}^d$) as
\[ (T_a \psi)(x) = \psi(\tau_a(x)), \quad x \in V(G). \]
By ($\mathscr{L}_2$) again, 
the Laplacian $L_G$ commutes with $T_a$ 
for all $a \in \mathbb{Z}^d$, {\it i.e}., $[L_G, T_a] = 0$.
Hence, we expect that
$L_G$ and $T_a$ can be simultaneously decomposable. 
Indeed, this can be accomplished as follows (for details, see \cite{KoSa} and \cite{Su}).
Let $\ell^2(V_s)= \mathbb{C}^s$ be the Hilbert space with the inner product
\[ \langle \xi, \eta \rangle_{V_s} = \sum_{i=1}^s \bar{\xi}_i \eta_i d_i,
	\quad \xi, \eta \in \ell^2(V_s). \]
Let $\mathscr{F}:\ell^2(V(G)) \to \int_{\mathbb{T}^d} ^\oplus \ell^2(V_s) \frac{dk}{(2\pi)^d}$ 
be a unitary operator defined as
$(\mathscr{F} \psi) (k) 
	= \left( \hat\psi_i(k) \right)_{i=1}^s$, 
where $\hat\psi_i(k) = \sum_{m \in \mathbb{Z}^d} e^{-i k \cdot m}\psi(m,v_i)$. 
Then, we have 
\[ \mathscr{F}T_a \mathscr{F}^{-1} 
	= \int_{\mathbb{T}^d}^\oplus e^{-i ak} \frac{dk}{(2\pi)^d} \] 
and the Floquet-Bloch decomposition,
\begin{equation}
\label{FBD}
\mathscr{F}  L_G \mathscr{F}^{-1}
= \int_{\mathbb{T}^d} ^\oplus L_G(k) \frac{dk}{(2\pi)^d},
\end{equation} 
where $L_G(k)$ is the Floquet $s\times s$ matrix and
$k \in \mathbb{T}^d = \mathbb{R}^d/ (2\pi \mathbb{Z})^d$
is the quasimomentum.

Let $\pi_*$ and $\pi^*$ be projections on $V(G)$ defined by
\[ \pi_*(m,v_i) = m \in \mathbb{Z}^d, \quad \pi^*(m,v_i) = v_i \in \mathbb{N}_s,
	\quad (m,v_i) \in V(G) \]
and set
\[ A_{i,j}(G) = \{ e \in A_{(0,v_i)}(G) \mid \pi^*(t(e)) = v_j \}. \]
In their paper \cite{KoSa}, 
Korotyaev and Saburova introduced a convenient notation
called the {\it edge index} $\chi(e)$:
\[ \chi(e) = \pi_*(t(e)) - \pi_*(o(e)) \in \mathbb{Z}^d,
	\quad e \in A(G). \]
They called an edge $e$ with non-zero index a {\it bridge}.
By ($\mathscr{L}_2$) and \eqref{vecon},
$\chi$ is $\mathbb{Z}^d$-invariant, {\it i.e.},
\begin{align*} 
\chi(\tau_a(e)) 
	& = \pi_*(\tau_a(t(e))) - \pi_*(\tau_a(o(e))) \\
	& = (\pi_*(t(e))-a) - (\pi_*(o(e))-a) = \chi(e), 
		\quad a \in \mathbb{Z}^d. 
\end{align*}
We also have $\pi_*(o(\tau_{\pi_*(o(e))}(e))) = 0 \in \mathbb{Z}^d$ and
\[ \chi(e) = \pi_*(t(\tau_{\pi_*(o(e))}(e))), \quad e \in A(G). \]
In particular, if $e \in A_{i,j}(G)$ and $t(e) = (m,v_j)$,
then $\chi(e) = m$. 
The following are known:
\begin{lemma}[\cite{KoSa}, \cite{HiNo}]
\label{lem2.3}
{\rm
Let $G \in \mathscr{L}^d$. 
\begin{itemize}
\item[(i)] 
$L_G(k) = ((L_G)_{i,j}(k))_{i,j=1}^s$ 
in \eqref{FBD} is given by
\[ (L_G)_{i,j}(k) 
	=  \sum_{e \in A_{i,j}(G)}e^{i \chi(e) \cdot k}/d_i.
	 \]
\item[(ii)] $\displaystyle \sigma(L_G) = \sigma_{\rm ess}(L_G) = \bigcup_{i=1}^s \lambda_i(\mathbb{T}^d)$.
\end{itemize}
}
\end{lemma}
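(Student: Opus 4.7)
For part (i), the plan is to compute $\mathscr{F} L_G \mathscr{F}^{-1}$ directly from the definition of $L_G$ and read off the matrix entries. Fix $\psi \in \ell^2(V(G))$ and a vertex $(m,v_i) \in V(G)$. By the $\mathbb{Z}^d$-invariance built into $(\mathscr{L}_2)$, every $e \in A_{(m,v_i)}(G)$ is uniquely of the form $e = \tau_{-m}(e_0)$ for some $e_0 \in A_{(0,v_i)}(G)$, and if $e_0 \in A_{i,j}(G)$ then the $\mathbb{Z}^d$-invariance of $\chi$ together with the identity $\chi(e_0)=\pi_{*}(t(e_0))$ yields $t(e) = (m+\chi(e_0), v_j)$. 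Grouping by $j$ gives
\begin{equation*}
(L_G\psi)(m,v_i) = \frac{1}{d_i}\sum_{j=1}^s \sum_{e \in A_{i,j}(G)} \psi(m + \chi(e), v_j).
\end{equation*}
Multiplying by $e^{-ik\cdot m}$, summing over $m \in \mathbb{Z}^d$, and substituting $m' = m + \chi(e)$ yields
\begin{equation*}
(\mathscr{F} L_G \psi)_i(k) = \sum_{j=1}^s \left( \frac{1}{d_i}\sum_{e \in A_{i,j}(G)} e^{ik\cdot\chi(e)} \right) \hat\psi_j(k),
\end{equation*}
which is exactly the claimed formula for $(L_G)_{i,j}(k)$.

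For part (ii), the plan is to apply standard direct-integral spectral theory to \eqref{FBD}. Each fiber $L_G(k)$ is an $s\times s$ self-adjoint matrix whose entries are trigonometric polynomials in $k$, hence real-analytic on the compact torus $\mathbb{T}^d$. Writing its eigenvalues in increasing order as $\lambda_1(k) \leq \cdots \leq \lambda_s(k)$ gives continuous functions on $\mathbb{T}^d$. The direct-integral spectral theorem gives $\lambda \in \sigma(L_G)$ if and only if for every $\varepsilon>0$ the set $\{k : \operatorname{dist}(\lambda,\sigma(L_G(k))) < \varepsilon\}$ has positive Lebesgue measure; continuity of the $\lambda_i$ and compactness of $\mathbb{T}^d$ then make this equivalent to $\lambda \in \bigcup_{i=1}^s \lambda_i(\mathbb{T}^d)$, each image set being already closed.

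To upgrade this to $\sigma(L_G) = \sigma_{\rm ess}(L_G)$, examine the characteristic polynomial $p(k,\lambda) := \det(L_G(k)-\lambda I)$. For each fixed $\lambda_0 \in \mathbb{R}$, the map $k \mapsto p(k,\lambda_0)$ is a trigonometric polynomial, hence real-analytic on $\mathbb{T}^d$, and its zero set is therefore either all of $\mathbb{T}^d$ or of Lebesgue measure zero. In the first case, $\lambda_0$ is an eigenvalue of $L_G(k)$ for every $k$, so the spectral projection of $L_G$ onto $\{\lambda_0\}$ is infinite-dimensional and $\lambda_0 \in \sigma_{\rm ess}(L_G)$; in the second case, $\lambda_0 \notin \sigma(L_G)$. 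Hence $\sigma(L_G) \subset \sigma_{\rm ess}(L_G)$, and the reverse inclusion is automatic.

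The main obstacle is the final step: ruling out isolated eigenvalues of finite multiplicity. One has to avoid invoking analyticity of the individual eigenvalue branches $\lambda_i(k)$, which in general fails at crossing points; working instead with the single real-analytic function $k \mapsto p(k,\lambda_0)$ sidesteps this issue and delivers the needed dichotomy (infinite multiplicity versus exclusion from the spectrum) cleanly.
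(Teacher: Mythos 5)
The paper does not actually prove Lemma \ref{lem2.3}: it is quoted from \cite{KoSa} and \cite{HiNo} without argument, so your proposal is necessarily a self-contained substitute rather than a variant of the paper's proof. Part (i) is correct: using the $\mathbb{Z}^d$-automorphisms $\tau_a$ to reduce $A_{(m,v_i)}(G)$ to $A_{(0,v_i)}(G)$ and then Fourier transforming is the standard computation, and along the way you re-derive the real-space formula that the paper records as Proposition \ref{prop2.4} (the paper obtains that formula in the opposite direction, from the fiber matrix). The first half of part (ii) --- the direct-integral characterization of $\sigma(L_G)$ combined with continuity of the ordered eigenvalues on the compact torus --- is also fine.

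The gap is in the last step of (ii). Your dichotomy for $p(k,\lambda_0)=\det(L_G(k)-\lambda_0 I)$ is correct (identically zero, or a zero set of Lebesgue measure zero, by real-analyticity and connectedness of $\mathbb{T}^d$), but the conclusion you draw in the second case, namely $\lambda_0\notin\sigma(L_G)$, is false. Take $G=\mathbb{Z}$, so $s=1$ and $L_G(k)=\cos k$: for $\lambda_0=1/2$ the zero set of $k\mapsto\cos k-1/2$ consists of two points, yet $1/2\in\sigma(L_G)=[-1,1]$. What the measure-zero case actually yields is that $\lambda_0$ is not an \emph{eigenvalue} of $L_G$: if $L_G\psi=\lambda_0\psi$ with $\psi\neq 0$, then $(\mathscr{F}\psi)(k)\in\ker(L_G(k)-\lambda_0)$ for a.e.\ $k$ and $(\mathscr{F}\psi)(k)\neq 0$ on a set of positive measure, which forces $p(\cdot,\lambda_0)$ to vanish on a set of positive measure and hence identically. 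The repair is short: any point of $\sigma(L_G)\setminus\sigma_{\rm ess}(L_G)$ would be an isolated eigenvalue of finite multiplicity; by the dichotomy every eigenvalue of $L_G$ falls into the first case and therefore has infinite multiplicity; hence $\sigma(L_G)\setminus\sigma_{\rm ess}(L_G)=\emptyset$. With that one correction your argument is complete.
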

Using (i) of Lemma \ref{lem2.3}, we have the following.
\begin{proposition}
\label{prop2.4}
{\rm
Let $G \in \mathscr{L}^d$. 
\begin{equation*} 
(L_G \psi)(m,v_i) 
	= \sum_{j=1}^s \sum_{e \in A_{i,j}(G)} \psi(m + \chi(e), v_j)/d_i,
		\quad (m,v_i) \in V(G).
\end{equation*}
}
\end{proposition}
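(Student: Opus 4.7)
The plan is to reduce to the definition \eqref{DefLG} of $L_G$ and then exploit the $\mathbb{Z}^d$-action on $G$ guaranteed by ($\mathscr{L}_2$). Fix $(m,v_i) \in V(G)$. By \eqref{di} we have $\deg(m,v_i) = d_i$, so the definition yields
\[
(L_G\psi)(m,v_i) = \frac{1}{d_i}\sum_{e \in A_{(m,v_i)}(G)} \psi(t(e)).
\]
Thus everything reduces to rewriting $A_{(m,v_i)}(G)$ and the terminal map in terms of $A_{i,j}(G)$ and the edge index $\chi$.

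The key observation is that the automorphism $\tau_{-m}$ (which, by ($\mathscr{L}_2$) and the discussion preceding the proposition, extends to a graph automorphism) sends $(0,v_i)$ to $(m,v_i)$. Since graph automorphisms map $A_x(G)$ onto $A_{\tau(x)}(G)$, I would verify that
\[
A_{(m,v_i)}(G) = \tau_{-m}\bigl( A_{(0,v_i)}(G) \bigr) = \bigsqcup_{j=1}^{s} \tau_{-m}\bigl(A_{i,j}(G)\bigr),
\]
the last equality coming from the disjoint decomposition $A_{(0,v_i)}(G)=\bigsqcup_j A_{i,j}(G)$ induced by the values of $\pi^*\circ t$. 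For an edge $e \in A_{i,j}(G)$ one has $o(e)=(0,v_i)$ and $\pi^*(t(e))=v_j$, so $t(e) = (\chi(e),v_j)$ by the very definition of $\chi$. Applying $\tau_{-m}$ to this terminal gives $t(\tau_{-m}(e)) = \tau_{-m}(\chi(e),v_j) = (\chi(e)+m,v_j)$, and by the $\mathbb{Z}^d$-invariance of $\chi$ (displayed in the lines preceding the proposition) the edge index is unchanged.

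Substituting back produces
\[
\sum_{e \in A_{(m,v_i)}(G)} \psi(t(e)) = \sum_{j=1}^{s}\sum_{e \in A_{i,j}(G)} \psi\bigl(m+\chi(e),\,v_j\bigr),
\]
and dividing by $d_i$ yields the desired formula. There is no serious obstacle here; the only point to handle carefully is verifying that $\tau_{-m}$ really does transport $A_{(0,v_i)}(G)$ bijectively onto $A_{(m,v_i)}(G)$ preserving origins, terminals, and edge indices, which is exactly what \eqref{vecon} together with ($\mathscr{L}_2$) gives. As an independent check, one could instead derive the formula by inverting the Floquet decomposition \eqref{FBD} using the explicit matrix elements in Lemma \ref{lem2.3}(i); the orthogonality relation $\int_{\mathbb{T}^d} e^{i(m+\chi(e)-n)\cdot k}\,dk/(2\pi)^d=\delta_{n,m+\chi(e)}$ then collapses the double sum to the same expression, providing a useful consistency test.
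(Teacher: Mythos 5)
Your argument is correct, but it takes a genuinely different route from the paper. The paper proves Proposition \ref{prop2.4} by inverting the Floquet--Bloch decomposition: it writes $(L_G\psi)(m,v_i)$ as $\int_{\mathbb{T}^d}e^{ik\cdot m}\sum_j (L_G)_{i,j}(k)\hat\psi_j(k)\,dk/(2\pi)^d$, substitutes the explicit matrix elements from Lemma \ref{lem2.3}(i), and lets the orthogonality of the characters $e^{ik\cdot(m+\chi(e))}$ collapse the integral --- exactly the ``independent check'' you mention at the end of your proposal. Your main argument instead works directly from the definition \eqref{DefLG}: you use the translation automorphism $\tau_{-m}$ guaranteed by ($\mathscr{L}_2$) to carry $A_{(0,v_i)}(G)=\bigsqcup_j A_{i,j}(G)$ bijectively onto $A_{(m,v_i)}(G)$, observe that $t(e)=(\chi(e),v_j)$ for $e\in A_{i,j}(G)$ and that $\chi$ is translation-invariant, and sum. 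This is sound: the identities you need ($\deg(m,v_i)=d_i$ from \eqref{di}, the automorphism property from the discussion after ($\mathscr{L}_2$), and $\chi(\tau_a(e))=\chi(e)$) are all established in the paper before the proposition. What your route buys is independence from Lemma \ref{lem2.3}(i), which the paper imports from the literature; your computation is purely combinatorial and would in fact serve as an ingredient in a proof of that lemma. What the paper's route buys is brevity, given that the Floquet machinery has already been set up and is needed elsewhere. Either proof is acceptable; yours is the more self-contained of the two.
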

\begin{proof}
{\rm
By direct calculation, 
\begin{align*}
(L_G \psi)(m,v_i) 
	& = \int_{\mathbb{T}^d} \frac{dk}{(2\pi)^d}
		e^{ik \cdot m} \sum_{j=1}^s (L_G)_{i,j}(k)\hat \psi_j(k) \\
	& =  \sum_{j=1}^s \sum_{e \in A_{i,j}(G)}
		\int_{\mathbb{T}^d} \frac{dk}{(2\pi)^d}
		e^{ik \cdot (m +\chi(e)) } \hat \psi_j(k)/d_i \\
	&= \sum_{j=1}^s \sum_{e \in A_{i,j}(G)} \psi(m + \chi(e), v_j)/d_i.
\end{align*}
}
\end{proof}
\section{Results}\label{sec.3}

We call a perturbed graph $G^\prime$ of $G \in \mathscr{L}^d$
a {\it perturbed periodic graph} of $G$.
We define the propagation length  $l_G \in \mathbb{N}$ by
\begin{equation}
\label{proprang} 
 l_G = \sup_{j=1,\dots,d}~ \sup_{e \in A(G)}|\chi_j(e)|,
\end{equation}
where $\chi_j(e)$ is the $j$-th component of the edge index of $e$.
If $G$ is connected, then there exists a bridge, and hence  $l_G \geq 1$.  
We use the following condition.
\begin{itemize}
\item[($\mathscr{P}$)] There exists a sequence $\{x_n\}_{n=1}^\infty \subset V(G_0)$
such that 
\[ I_n(x_n) := \{ x \in V(G) \mid  \pi_*(x) - \pi_*(x_n) \in [-n-l_G+1,n+l_G-1]^d \} \subset \Lambda. \]
\end{itemize}
We are now in a position to state our main theorem:
\begin{theorem}
\label{mainthm}
{\rm
Let $G^\prime$ be a perturbed periodic graph of $G \in \mathscr{L}^d$. 
Suppose that $G^\prime$ satisfies ($\mathscr{P}$).
Then,
\[ \sigma_{\rm ess}(L_G) \subset \sigma_{\rm ess}(L_{G^\prime}). \] 
}
\end{theorem}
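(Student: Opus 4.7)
The strategy is to produce, for each $\lambda \in \sigma_{\rm ess}(L_G)$, a Weyl sequence for $L_{G'}$ by transplanting a truncated Bloch wave through $\mathscr{U}_0$; condition $(\mathscr{P})$ will ensure the perturbation $K_\Lambda$ annihilates the truncation. First, Lemma \ref{lem2.3}(ii) gives $k_0 \in \mathbb{T}^d$, some $i\le s$, and a normalized eigenvector $\xi \in \mathbb{C}^s$ of $L_G(k_0)$ with $L_G(k_0)\xi = \lambda \xi$. Setting $f(m, v_j) := e^{i k_0 \cdot m}\xi_j$, a direct computation from Proposition \ref{prop2.4} shows $L_G f = \lambda f$ pointwise. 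With $\{x_n\} \subset V(G_0)$ the sequence supplied by $(\mathscr{P})$ and $B_n := [-n+1, n-1]^d \cap \mathbb{Z}^d$, I define the cutoff
\[ \psi_n(m, v_j) := \mathbf{1}_{B_n}\bigl(m - \pi_*(x_n)\bigr)\, e^{i k_0 \cdot m}\, \xi_j, \]
supported inside the box of radius $n-1$ centred at $\pi_*(x_n)$, hence inside $I_n(x_n) \subset \Lambda \subset V(G_0)$; in particular $\psi_n \in \ell^2(V(G_0))$ and \eqref{U0rest} applies to it.

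The routine Weyl conditions then follow by counting. Uniform boundedness of $|\xi|$ and $\deg_G$, together with the plane-wave identity $L_G f = \lambda f$, imply that $(L_G - \lambda)\psi_n$ is supported in the boundary shell of $\pi_*$-thickness at most $l_G$ around the cutoff box, yielding $\|(L_G - \lambda)\psi_n\|^2 = O(n^{d-1})$, while $\|\psi_n\|^2 = |B_n|\sum_j |\xi_j|^2 d_j = \Theta(n^d)$; thus $\|(L_G - \lambda)\psi_n\|/\|\psi_n\| = O(n^{-1/2}) \to 0$. Setting $\Psi_n := \mathscr{U}_0\psi_n/\|\mathscr{U}_0\psi_n\|$ (well defined since $\|\mathscr{U}_0\psi_n\| \geq c_0 \|\psi_n\| > 0$ by \eqref{U0rest}), we have $\|\Psi_n\| = 1$, and the weak convergence $\Psi_n \rightharpoonup 0$ follows from the pointwise bound $|\Psi_n(y)| \leq \|\xi\|_\infty/\|\mathscr{U}_0\psi_n\| \to 0$ together with the density of finitely supported vectors in $\ell^2(V(G'))$.

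The crux of the proof, and where $(\mathscr{P})$ is genuinely used, is showing $K_\Lambda \psi_n = 0$: with this in hand, \eqref{perturb} gives $(L_{G'} - \lambda)\mathscr{U}_0\psi_n = \mathscr{U}_0(L_G - \lambda)\psi_n$, and the bound $\|(L_{G'} - \lambda)\Psi_n\| \leq (C_0/c_0)\|(L_G - \lambda)\psi_n\|/\|\psi_n\| \to 0$ finishes the proof. Since $K_\Lambda$ is already supported on $\varphi^{-1}(\Lambda)^{\rm c}$, it suffices to check that both $L_{G'}\mathscr{U}_0\psi_n$ and $\mathscr{U}_0 L_G\psi_n$ vanish off $\varphi^{-1}(\Lambda)$. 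The key geometric observation is that every vertex of $V(G)$ at $\pi_*$-distance at most $l_G$ from $\mathrm{supp}\,\psi_n$ lies in $I_n(x_n) \subset \Lambda$, exactly the $l_G$-buffer built into $(\mathscr{P})$. Combined with Lemma \ref{lem_bulk}, which identifies $A_{\varphi^{-1}(z)}(G')$ with $\varphi^{-1}(A_z(G))$ for $z \in \Lambda$, this forces every $G'$-neighbour of $\mathrm{supp}\,\mathscr{U}_0\psi_n$ to sit in $\varphi^{-1}(\Lambda)$; a parallel argument inside $V(G)$ disposes of $\mathscr{U}_0 L_G \psi_n$. The expected obstacle is precisely this vertex-by-vertex neighbour-chasing, because it must reconcile the edge condition $A_x(G) \subset A(G_0)$ in the definition of $\Lambda$ with the one-step propagation of the Laplacians on both sides of $\mathscr{U}_0$ before the perturbation $K_\Lambda$ can be read off as zero on the support of $\psi_n$.
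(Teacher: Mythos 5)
Your proposal is correct and follows essentially the same route as the paper: a truncated Bloch wave transplanted into $\ell^2(V(G^\prime))$ by $\mathscr{U}_0$, with the $l_G$-buffer built into ($\mathscr{P}$) used—exactly as in the paper's appendix lemmas, via Lemma \ref{lem_bulk}—to show $K_\Lambda$ annihilates the cutoff state, and a boundary estimate to verify the Weyl condition. The only deviation is your sharp indicator cutoff $\mathbf{1}_{B_n}$ in place of the paper's tent function $\rho_n$, which replaces the $O(n^{-1})$ rate of Lemma \ref{lem3.3} by a cruder $O(n^{-1/2})$ boundary-shell count; that is still sufficient, and slightly more elementary.
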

\begin{remark}
{\rm
From a physical point of view, 
$\sigma_{\rm ess}(L_G)$ 
can be considered as the bulk spectrum. 
As shown below,
the bulk spectrum $\lambda \in \sigma_{\rm ess}(L_G)$ 
corresponds to a Weyl sequence of states 
with support in the unperturbed part $\varphi^{-1}(\Lambda)$ of $G^\prime$.
}
\end{remark}

Since $\sigma(L_{G^\prime})$ is contained in $[-1,1]$,
we have the following.
\begin{corollary}
\label{FSP}
{\rm
Let $G^\prime$ be a perturbed periodic graph of $G \in \mathscr{L}^d$.
Suppose that $G$ satisfies ($\mathscr{P}$) and $G$ has the FSP.
Then, $G^\prime$ has the FSP.
}
\end{corollary}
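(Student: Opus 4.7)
The plan is to derive the corollary as a direct consequence of Theorem \ref{mainthm}, together with two elementary facts already in place: the periodic graph $G$ has $\sigma(L_G)=\sigma_{\rm ess}(L_G)$, and the normalized Laplacian on any graph has spectrum contained in $[-1,1]$.

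First, I would invoke Lemma \ref{lem2.3}(ii): since $G\in\mathscr{L}^d$, we have the identity $\sigma(L_G)=\sigma_{\rm ess}(L_G)$. Hence the FSP hypothesis on $G$, namely $\sigma(L_G)=[-1,1]$, immediately upgrades to $\sigma_{\rm ess}(L_G)=[-1,1]$.

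Second, I would apply Theorem \ref{mainthm} to the pair $(G,G^\prime)$ under the standing assumption ($\mathscr{P}$), yielding the inclusion
\[
[-1,1]=\sigma_{\rm ess}(L_G)\subset\sigma_{\rm ess}(L_{G^\prime})\subset\sigma(L_{G^\prime}).
\]
Combining this with the reverse inclusion $\sigma(L_{G^\prime})\subset[-1,1]$ (noted immediately before the corollary, and valid because $L_{G^\prime}$ is self-adjoint with operator norm at most one on $\ell^2(V(G^\prime))$), we conclude $\sigma(L_{G^\prime})=[-1,1]$, which is the FSP for $G^\prime$.

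There is no genuine obstacle here: all the analytic work is absorbed into Theorem \ref{mainthm}, and the corollary merely packages the Weyl-sequence inclusion in the special case where the bulk spectrum already fills the maximal interval $[-1,1]$. The only thing one must be careful about is to cite Lemma \ref{lem2.3}(ii) to pass from $\sigma(L_G)$ to $\sigma_{\rm ess}(L_G)$, since the FSP is stated in terms of the full spectrum whereas Theorem \ref{mainthm} controls only the essential spectrum.
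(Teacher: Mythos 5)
Your proposal is correct and matches the paper's (implicit) argument exactly: the paper derives the corollary from Theorem \ref{mainthm} together with the remark that $\sigma(L_{G^\prime})\subset[-1,1]$, and the passage from $\sigma(L_G)$ to $\sigma_{\rm ess}(L_G)$ via Lemma \ref{lem2.3}(ii) is precisely the point you rightly flag. Nothing further is needed.
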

\begin{proof}[Proof of Theorem \ref{mainthm}]
As in the proof of Proposition \ref{prop2.1},
it suffices to show the existence of a Weyl sequence for $L_{G^\prime}$.
To this end, we fix $\lambda \in \sigma_{\rm ess}(G)$.
By Lemma \ref{lem2.3}, 
we have $\lambda = \lambda_h(k_0)$ with some $h =1,\ldots, s$
and $k_0 \in \mathbb{T}^d$.
Let $\xi_0 \in \ell^2(V_s)$ be a normalized eigenvector of 
the Floquet matrix $L_G(k_0)$ corresponding to the eigenvalue $\lambda_h(k_0)$:
$L_G(k_0) \xi_0 = \lambda_h(k_0) \xi_0$.

For $n \in \mathbb{N}$, we define a function $\rho_n:\mathbb{Z}^d \to [0,1]$ as
\[ \rho_n(m) = \prod_{j=1}^d \rho(m_j/n), 
	\quad m = (m_1,\dots,m_d) \in \mathbb{Z}^d, \]
where 
\begin{align*}
  \rho(t) := \begin{cases}
		1 - |t|, \quad & |t|\leq 1, \\ 
		0,                          \quad & |t|> 1.
	       \end{cases}
\end{align*}
Then, $\rho_n$ is supported in $[-n+1,n-1]^d\cap \mathbb{Z}^d$ and 
\begin{equation}
\label{rhon}
\sum_{m \in \mathbb{Z}^d} |\rho_n(m)|^2
	= \prod_{j=1}^d \sum_{m_i \in [-n+1,n-1] \cap \mathbb{Z}}|\rho(m_j/n)|^2 
 	=  \left( \frac{ 2n^2+1}{3n} \right)^d.
\end{equation}
Let $\psi_n$ ($n \in \mathbb{N}$) be vectors in  $\ell^2(V(G))$ defined by
\[ {\psi_n}(m,v_i) = e^{ik_0 \cdot m}\rho_n(m) (\xi_0)_i,
	\quad (m,v_i) \in V(G), \]
where $(\xi_0)_i$ is the $i$-th component of $\xi_0$.
Because $\xi_0$ is a normalized vector, we know, from \eqref{rhon} that
\begin{equation}
\label{normofpsin01} 
\|\psi_n\|_{\ell^2(V(G))}^2
	= \left(\sum_{m \in \mathbb{Z}^d} |\rho_n(m)|^2 \right) \|\xi_0\|_{V_s}^2   
		= \left( \frac{ 2n^2+1}{3n} \right)^d.
\end{equation}
Combining ($\mathscr{P}$) with the fact  ${\rm supp}\rho_n = [-n+1,n-1]^d \cap \mathbb{Z}^d$,
we have
\begin{equation}
\label{supp01} 
{\rm supp} \psi_n(\tau_{\pi_*(x_n)}(\cdot)) \subset I_n(x_n) \subset \Lambda.
\end{equation}
Hence, by \eqref{U0rest} and \eqref{normofpsin01}, 
\begin{equation} 
\label{normofpsin02}
\|\mathscr{U}_0T_{\pi_*(x_n)}{\psi}_n\|
	\geq c_0 \left( \frac{ 2n^2+1}{3n} \right)^{d/2}.  
\end{equation}
We now define a sequence $\{\Psi_n\} \subset \ell^2(V(G^\prime))$ of normalized vectors as
\begin{equation} 
\label{Psin}
\Psi_n = \mathscr{U}_0T_{\pi_*(x_n)}\psi_n/\|\mathscr{U}_0 T_{\pi_*(x_n)} \psi_n\|. 
\end{equation}  
By definition, we know that $\sup_{x \in V(G)}|\psi_n(x)| \leq 1$.
By \eqref{normofpsin02},
\begin{equation*}
\sup_{x^\prime \in V(G^\prime)}
	|\Psi_n(x^\prime)| \leq {c_0}^{-1} \left( \frac{ 2n^2+1}{3n} \right)^{-d/2}.
\end{equation*}
Hence, it follows that
\[ \lim_{n \to \infty} \langle \Phi, \Psi_n \rangle = 0 \]
for	all finitely supported vectors $\Phi \in V(G^\prime)$, {\it i.e.}, 
$\#{\rm supp}\Phi < \infty$.
A standard limiting argument yields the result that ${\rm w-}\lim_{n \to \infty} \Psi_n = 0$.

It remains to prove the following:
\begin{equation}
\lim_{n \to \infty} (L_{G^\prime} - \lambda) \Psi_n = 0.
\end{equation}
By \eqref{perturb}, we observe that
\begin{equation}
\label{LGprimePsin}
(L_{G^\prime}-\lambda)\Psi_n
	= C_n \left(\mathscr{U}_0 (L_G - \lambda) T_{\pi_*(x_n)} \psi_n
		+ K_{\Lambda} T_{\pi_*(x_n)} \psi_n\right),
\end{equation}
where $C_n := \|\mathscr{U}_0T_{\pi_*(x_n)}{\psi}_n\|^{-1}$.
Since $T_a$ commutes with $L_G$ for all $a \in \mathbb{Z}^d$,
the first term of \eqref{LGprimePsin} is
\begin{equation}
\label{eq*01}
C_n \mathscr{U}_0 (L_G-\lambda) T_{\pi_*(x_n)} \psi_n
	= C_n \mathscr{U}_0T_{\pi_*(x_n)}  (L_G -\lambda) \psi_n. 
\end{equation}
	
We will prove that the second term of \eqref{LGprimePsin} vanishes.
Because by  the definition of $K_\Lambda$, 
$(K_{\Lambda} T_{\pi_*(x_n)} \psi_n)\mid_{\varphi^{-1}(\Lambda) }= 0$,
it suffices to prove the following:
\begin{equation}
\label{*02} 
(K_{\Lambda} T_{\pi_*(x_n)} \psi_n)\mid_{\varphi^{-1}(\Lambda)^{\rm c} }= 0. 
\end{equation}
Let $x^\prime \in \varphi^{-1}(\Lambda)^{\rm c}$.
By \eqref{K}, 
\begin{equation*}
\left(K_\Lambda T_{\pi_*(x_n)}\psi_n\right)(x^\prime)
= \left(L_{G^\prime}\mathcal{U}_0 T_{\pi_*(x_n)}\psi_n \right)(x^\prime) 
	- \left(\mathcal{U}_0 L_{G} T_{\pi_*(x_n)}\psi_n  \right)(x^\prime). 
\end{equation*}
To show \eqref{*02}, it suffices to prove the following lemma,
which is proved in the appendix.
\begin{lemma}
\label{lem3.5*}
{\rm
Let $x^\prime \in \varphi^{-1}(\Lambda)^{\rm c}$. 
\begin{itemize}
\item[(i)] $\left(L_{G^\prime}\mathcal{U}_0 T_{\pi_*(x_n)}\psi_n \right)(x^\prime)  =0$.
\item[(ii)] $\left(\mathcal{U}_0 L_{G} T_{\pi_*(x_n)}\psi_n  \right)(x^\prime) =0$.
\end{itemize}
}
\end{lemma}

Taking the above argument, \eqref{U0rest}, and eqref{normofpsin02} into account, 
we observe, from \eqref{LGprimePsin} and \eqref{eq*01} that
\begin{align} 
\|(L_{G^\prime}-\lambda)\Psi_n\|_{\ell^2(V(G^\prime))}
	&  = C_n \|\mathscr{U}_0T_{\pi_*(x_n)}(L_G-\lambda)\psi_n\|_{\ell^2(V(G^\prime))}  \notag \\
	& \leq   C_0 c_0^{-1} \left( \frac{ 2n^2+1}{3n} \right)^{-d/2} \|(L_G-\lambda)\psi_n\|_{\ell^2(V(G))}. \label{bound01}
\end{align}
Because $\xi_0$ is an eigenvector of $L_G(k_0)$ 
corresponding to $\lambda = \lambda_h(k_0)$,
it follows from Lemma \ref{lem2.3} that
\begin{align*}
\lambda\psi_n(m,v_i)
	& = e^{ik_0 \cdot m} \rho_n(m) \lambda_h(k_0) (\xi_0)_i \\
	& = e^{ik_0 \cdot m} \rho_n(m) (L_G(k_0)\xi_0)_i \\
	& = \sum_{j=1}^s \sum_{e \in A_{i,j}(G)} 
		e^{i(m + \chi(e)) \cdot k_0}\rho_n(m) (\xi_0)_j/d_i . 
\end{align*}
From Proposition \ref{prop2.4},
\begin{align*}
((L_G - \lambda)\psi_n) (m,v_i)
 	& = \sum_{j=1}^s \sum_{e \in B_{i,j}(G)}   
 		e^{i(m + \chi(e))\cdot k_0} (\xi_0)_j (\rho_n(m + \chi(e))  -\rho_n(m))/d_i,
\end{align*}
where $B_{i,j}(G)$ is the set of all bridges contained in $A_{i,j}(G)$:
\[ B_{i,j}(G) = \{ e \in A_{i,j}(G) \mid \chi(e) \not= 0 \}. \]
Let $B(G)$ be the set of all bridges
\[ B(G) = \{ e \in B_{i,j}(G) \mid i, j=1,\ldots, s \}. \]
By the Schwartz inequality and the fact that $d_i \geq 1$, 
\begin{align*}
\|(L_G - \lambda)\psi_n\|^2
	& = \sum_{m \in \mathbb{Z}^d} \sum_{i=1}^s|((L_G - \lambda)\psi_n) (m,v_i)|^2 \\
	& \leq  \sum_{m \in \mathbb{Z}^d} \sum_{i=1}^s
		\left( \sum_{j=1}^s \sum_{e \in B_{i,j}(G)}  (\xi_0)_j^2 \right) \\
	& \qquad \times 
		\left( \sum_{j=1}^s \sum_{e \in B_{i,j}(G)} 
			|\rho_n(m + \chi(e))  -\rho_n(m)|^2\right) \\
	& \leq \# B(G) \sum_{e \in B(G)} 
			\sum_{ m \in \mathbb{Z}^d} |\rho_n(m + \chi(e))  -\rho_n(m)|^2.
\end{align*}
Note that
\begin{align*}
|\rho_n(m + \chi(e))  -\rho_n(m)|
& = \prod_{\chi_j(e) =0} |\rho(m_j/n)| \\
& \qquad \times |\prod_{\chi_i(e) \not= 0} \rho((m_i-\chi_i(e))/n) - \prod_{\chi_i(e) \not=0} \rho(m_i/n)|
\end{align*}
and $\sum_{m \in \mathbb{Z}} |\rho(m/n)|^2 = \sum_{m \in \mathbb{Z}} |\rho((m-l)/n)|^2 $.
We observe that
\begin{align}
\|(L_G-\lambda)\psi_n\|^2
	& \leq \#B(G) \left(\sum_{m \in \mathbb{Z}^d} |\rho(m)|^2 \right)^{d-1} \notag \\
	& \qquad \times
		 \sum_{e \in B(G)} \sum_{\chi_i(e) \not=0}
		 	 \sum_{m \in \mathbb{Z}^d}
		 	 	|\rho((m - \chi_i(e))/n) -\rho(m/n)|^2. \label{bound02}
\end{align}
Combining \eqref{bound01} with \eqref{bound02} yields the result that
\begin{align*}
\|((L_{G^\prime})-\lambda)\Psi_n\|_{\ell^2(V(G^\prime))}^2
& \leq C_0^2 \# B(G) \left( \frac{ 2n^2+1}{3n} \right)^{-1} \\
& \quad \times \sum_{e \in B(G)} \sum_{\chi_i(e) \not=0}
		 	 \sum_{m \in \mathbb{Z}^d}
		 	 	|\rho((m - \chi_i(e))/n) -\rho(m/n)|^2.
\end{align*}
We complete the proof of Theorem \ref{mainthm} 
by using the following lemma.
\end{proof}
\begin{lemma}
\label{lem3.3}
{\rm
For all $l \in \mathbb{Z}$,
\[ \sum_{m \in \mathbb{Z}} |\rho((m - l)/n) - \rho(m/n)|^2 = O(n^{-1}) \]
as $n \to \infty$.
}
\end{lemma}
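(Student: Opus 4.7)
The plan is to exploit two elementary facts: (a) the tent function $\rho$ is Lipschitz continuous with constant $1$, so that $m \mapsto \rho(m/n)$ on $\mathbb{Z}$ is Lipschitz with constant $1/n$; and (b) both $\rho(m/n)$ and $\rho((m-l)/n)$ are supported on a set of $m\in\mathbb{Z}$ of size $O(n)$. Multiplying a pointwise bound of order $l^2/n^2$ by $O(n)$ nonvanishing terms will yield the claimed $O(n^{-1})$ total.

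The first step is to record the pointwise estimate. From the definition $\rho(t) = \max(1 - |t|, 0)$, which is piecewise linear with slopes in $\{-1, 0, 1\}$, one has $|\rho(s) - \rho(t)| \leq |s - t|$ for all $s,t\in\mathbb{R}$; applied to $s = (m-l)/n$ and $t = m/n$ this yields
\[
|\rho((m-l)/n) - \rho(m/n)| \leq |l|/n.
\]

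The second step is to bound the number of indices $m$ for which the summand does not vanish. The sets $\{m\in\mathbb{Z} : |m|\leq n\}$ and $\{m\in\mathbb{Z} : |m-l|\leq n\}$ each contain at most $2n+1$ integers, so the support of the summand (as a function of $m$) is contained in a set of cardinality at most $4n + 2$. Combining with the pointwise bound gives
\[
\sum_{m\in\mathbb{Z}} |\rho((m-l)/n) - \rho(m/n)|^2 \leq (4n+2)\cdot\frac{l^2}{n^2} = O(n^{-1})
\]
as $n\to\infty$, with $l$ held fixed. The argument is essentially a one-line calculation; the only thing to verify is the trade-off between Lipschitz smallness ($O(n^{-2})$ per term) and support growth ($O(n)$ terms), so I do not expect a substantive obstacle.
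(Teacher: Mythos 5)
Your proof is correct. It rests on the same quantitative mechanism as the paper's — a per-term bound of order $l^{2}/n^{2}$ multiplied by an $O(n)$ count of nonvanishing terms — but you package it more cleanly. The paper first reduces to $l\geq 0$ by evenness and then splits the sum into three pieces: an overlap region $I_{1}$ where both $m/n$ and $(m-l)/n$ lie in $[-1,1]$ (estimated there via the explicit piecewise-linear form, $|\,|m+l|-|m|\,|\leq l$), plus two boundary strips $I_{2},I_{3}$ of length $l$ where only one term survives (each of size $O(n^{-2})$). Your observation that $\rho(t)=\max(1-|t|,0)$ is globally $1$-Lipschitz makes the bound $|\rho((m-l)/n)-\rho(m/n)|\leq |l|/n$ valid for every $m$ at once, so the case decomposition and the evenness reduction become unnecessary; you pay only a slightly cruder constant on the boundary strips (where the paper gets $O(n^{-2})$), which is irrelevant to the stated $O(n^{-1})$ conclusion. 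Both arguments are complete; yours is shorter and arguably the more natural one.
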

We prove the lemma in the appendix.

\section{Examples}
In this section, we present some examples of Theorem \ref{mainthm}
and Corollary \ref{FSP}.
\begin{example}(Random pendant graph)
\label{ex_random}
{\rm
We consider a graph obtained from $\mathbb{Z}^2$ 
by randomly adding pendant vertices.
Let $G = \mathbb{Z}^2$. 
Note that $G \in \mathscr{L}^d$,
because $G \simeq \mathbb{Z}^2 \times\{v_1\}$
and $ \mathbb{Z}^2 \times\{v_1\}$ satisfies ($\mathscr{L}_1$) and ($\mathscr{L}_2$).
Then, the map $\pi_*$ is trivial, {\it i.e.}, $\pi_*(x)=x$ for all $x\in\mathbb{Z}^2$.

Let $q_x$ $(x\in\mathbb{Z}^2)$ be a Bernoulli independent, identically distribution (i.i.d.) random variable
 with $\mathbb{P}(q_x=1) = \mathbb{P}(q_x=0) = \frac{1}{2}$.
We define 
\begin{align*}
   G_0     &:= G \\
   V(G_0') &:= \{(x,0) \mid x\in \mathbb{Z}^2 \},  \\
   E(G_0') &:= \{ e \mid V(e) = \{(x,0),(y,0)\} , |x-y|=1 \},  \\
   V(G')   &:= V(G_0') \cup \{(x,1) \mid q_x=1\} \subset \mathbb{Z}^2 \times\{0,1\}, \\
   E(G')   &:=   E(G_0') \cup \{ e \mid V(e) = \{(x,0),(x,1)\}, q_x=1 \}.
\end{align*}
Then, $G'=(V(G'),E(G'))$ is a perturbed graph of $G$ with 
$\varphi((x,0)) = x$ $(x \in V(G_0))$. 
The vertex $(x,1) \in V(G^\prime)$ is a pendant vertex,
which is added to the vertex $(x,0)$ of $G^\prime_0 \simeq \mathbb{Z}^2$
with probability $\mathbb{P}(q_x=1) =1/2$. 
In this sense,
the graph $G^\prime$ is considered as 
a graph obtained from $\mathbb{Z}^2$ by adding pendant vertices with probability $1/2$.
See Fig. 1. 
\begin{figure}[tbp]
  \label{fig_random}
  \centering
  \input
  {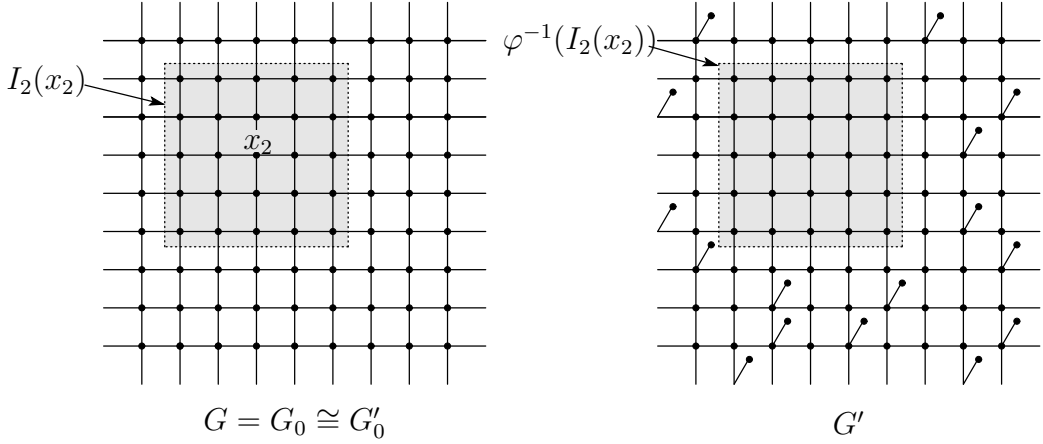}
  \caption{Graphs in Example \ref{ex_random}}
\end{figure}
In this case, $\ell_G = 1$ and $I_n(x) = \{y \in\mathbb{Z}^2 | y-x \in [-n,n]^2\}$.
Since $q_x$ is a Bernoulli i.i.d., 
$G^\prime$ satisfies ($\mathscr{P}$) almost surely.
Indeed, for each $x\in \mathbb{Z}^2$,
\begin{align*}
    \mathbb{P}(I_n(x) \not\subset \Lambda) 
    = 1 - \Big(\frac{1}{2}\Big)^{\# I_n(x)}
    = 1 - \Big(\frac{1}{2}\Big)^{(2n+1)^2}.
\end{align*}
Let $\xi_j \in \mathbb{Z}^2$ satisfy $I_n(\xi_j) \cap I_n(\xi_k) = \emptyset ~(i\neq j)$.
Then, for any $n\in \mathbb{N}$, 
\begin{align*}
   \mathbb{P}(\forall x \in \mathbb{Z}^2, ~ I_n(x) \not\subset \Lambda)
   ~\leq &~
   \mathbb{P}(\forall j=1,\dots,N, I_n(\xi_j) \not\subset \Lambda) \\
    = & \left[ 1- \left(\frac{1}{2} \right)^{(2n+1)^2} \right]^N 
    \longrightarrow  0 ~ (N \to \infty).
\end{align*}
Hence, there almost surely  exists
a sequence $\{x_n \} \subset \mathbb{Z}^2$ such that $I_n(x_n) \subset \Lambda$.
By Corollary \ref{FSP}, 
we have
\begin{align*}
\sigma_\mathrm{ess}(L_{G'}) = [-1,1], \quad \mathrm{a.s.}
\end{align*}
}
\end{example}

\begin{example}(Cone-like graph)
\label{ex_cone}
{\rm
The unperturbed graph is $G = \mathbb{Z}^2$.
We set
\begin{align*}
   V(G_0) &:= \{ x=(x_1,x_2) \mid x_i \geq 0, i=1,2 \},   \\
   E(G_0) &:= \{ e \mid V(e) = \{x,y\}, |x-y|=1, ~ x,y \in V(G_0)\}.
\end{align*}
A cone-like graph $G'=(V(G'),E(G'))$ is defined by
\begin{align*}
   V(G') &= V(G_0), \\
   E(G') &= E(G_0) \cup \{e \mid V(e) = \{(x_1,0),(0,x_1)\} \}.
\end{align*}
Setting $V(G_0') := V(G_0)$ and $E(G_0') := E(G_0)$, $G'$ becomes
 a perturbed graph of $\mathbb{Z}^2$.
It is easy to show that $\Lambda = \{(x_1,x_2) \in V(G') \mid x_1 \geq 1, x_2 \geq 1\}$.
As in the previous example, the maps $\pi_*$ and $\varphi$ are trivial and $\ell_G=1$.
Since $I_n(x):=\{y\in\mathbb{Z}^2 \mid y-x \in [-n,n]^2\}$, 
if we take $x_n = (n+1,n+1) \in V(G_0)$,
then $I_n(x_n) \subset \Lambda$. 
Thus, the graph $G'$ satisfies ($\mathscr{P}$).
Then, by Corollary \ref{FSP}, 
\begin{align*}
\sigma_\mathrm{ess}(L_{G'}) = [-1,1].
\end{align*}
}
\end{example}

\begin{example}(Upper-half plane)
\label{ex_upp}
{\rm
The unperturbed graph is $G = \mathbb{Z}^2$.
The upper-half plane is defined by
\begin{align*}
   V(G') &= \{(x_1,x_2) \mid x_1 \in \mathbb{Z}, ~ x_2 \geq 0 \}, \\
   E(G') &= \{ e \mid V(e) = \{x,y\}, |x-y|=1, ~ x,y \in V(G') \}.
\end{align*}
Let $G_0 = G_0^\prime = G^\prime$. 
$G'$ is a perturbed graph of $\mathbb{Z}^2$.
One can check that $\pi_* and \varphi$ are trivial maps, $\ell_G=1$, and
$\Lambda = \{(x_1,x_2) \in \mathbb{Z}^2 \mid x_2\geq 1\}$.
Setting $x_n = (0,n+1)$, $I_n(x_n) \subset \Lambda$.
By Corollary \ref{FSP},
we have
\begin{align*}
\sigma_\mathrm{ess}(L_{G'}) = [-1,1].
\end{align*}
}
\end{example}

\begin{example}
\label{counter_ex}
{\rm
Let $G \in \mathscr{L}^1$ be the graph obtained from $\mathbb{Z}$
by adding a pendant vertex to each $x \in \mathbb{Z}$ 
(see Example \ref{ex_pendant}). 
Let $G^\prime$ be a graph obtained from $G$
by adding a pendant vertex to each vertex $x \geq 0$.
$G^\prime$ is a graph obtained from $\mathbb{Z}$
by adding one pendant vertex to each vertex $x <0$
and two pendant vertices to each vertex $x \geq 0$.
See Fig. 2. 
\begin{figure}[htbp]
  \label{fig_counter}
  \centering
  \input
  {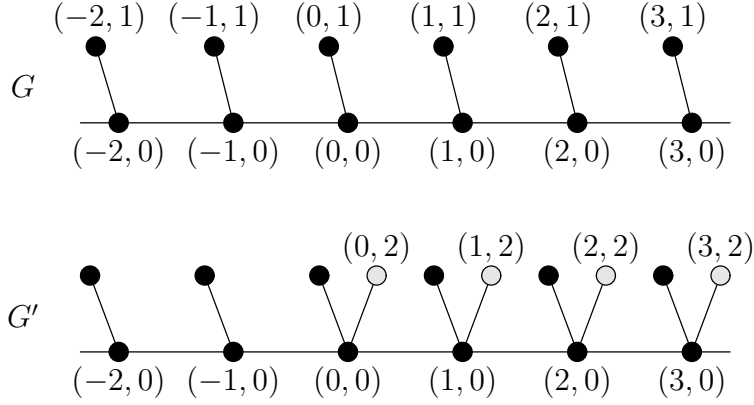}
  \caption{Graphs in Example \ref{counter_ex}}
\end{figure}
More precisely, we set
\begin{align*}
V(G^\prime) &= V(G) \cup \{(x,s) \mid x \geq 0, s \in \{ 0,1,2\} \}, \\
E(G^\prime) &= E(G) \cup \{ e \mid V(e) = \{ (x,0), (x,2) \}, x \geq 0 \}, \\
V(G) &= \{(x,s) \mid x \in \mathbb{Z}, s \in \{ 0,1\} \}, \\
E(G) & = \{ e \mid V(e) = \{(x,0), (y,0)\}, |x-y|=1 \} \\
	&\qquad \cup  \{ e \mid V(e) = \{(x,0), (x,1)\}, x \in \mathbb{Z} \}.
\end{align*} 

The vertices $(x,1)$ and $(x,2)$ are pendant vertices adjacent to $(x,0)$.
Let $G_0 = G_0^\prime = G \subset G^\prime$.
Then, $\Lambda = \{ (x,s) \in V(G) \mid x < 0, s \in \{0,1\} \}$.
Clearly, $G^\prime$ satisfies ($\mathscr{P}$). 
By Theorem \ref{mainthm}, we have
\[  \sigma_{\rm ess}(L_{G}) 
	= \left[ -1, -\frac{1}{3} \right] \cup \left[ \frac{1}{3}, 1 \right]
	\subset \sigma_{\rm ess}(L_{G^\prime}).  \]
On the other hand, $L_{G^\prime}$ has zero eigenvalues with infinite multiplicity.
Indeed, $\Psi^{(n)} \in \ell^2(V(G^\prime))$ ($n \geq 0$) defined below
are zero eigenstates:
\[ \Psi^{(n)}(x,s) 
	= \begin{cases}
		 1/\sqrt{2}, & (x,s) = (n,1) \\
		 -1/\sqrt{2}, & (x,s) = (n,2) \\
		 0, & {\rm otherwise}.
		 \end{cases}
\]
Thus, $0 \in \sigma_{\rm ess}(L_{G^\prime}) \setminus \sigma_{\rm ess}(L_G)$.
}
\end{example}
\appendix
\section{Proof of Lemmas}
\subsection{Proof of Lemma \ref{lem3.5*}}
We first prove (i) of Lemma \ref{lem3.5*}. 
Let $x^\prime \in \varphi^{-1}(\Lambda)^{\rm c}$.
From \eqref{Vecid0}, we have
\begin{align*} 
(L_{G^\prime}\mathscr{U}_0T_{\pi_*(x_n)}\psi_n)(x^\prime)
	& =\sum_{e^\prime \in A_{x^\prime}(G^\prime)} 
			\left(\mathscr{U}_0T_{\pi_*(x_n)}\psi_n\right)(t(e^\prime)) \\
	& =\sum_{e^\prime \in A_{x^\prime}(G^\prime), ~t(e^\prime) \in \varphi^{-1}(V(G_0))} 
			\left(T_{\pi_*(x_n)}\psi_n\right)(\varphi(t(e^\prime)))  \\
	& =\sum_{e^\prime \in A_{x^\prime}(G^\prime), ~t(e^\prime) \in \varphi^{-1}(V(G_0))} 
			\psi_n(\tau_{\pi_*(x_n)}(\varphi(t(e^\prime)))).
\end{align*}
It suffices to prove the following:
\begin{lemma}
{\rm
Let $x^\prime \in \varphi^{-1}(\Lambda)^{\rm c}$,  $e^\prime \in A_{x^\prime}(G^\prime)$
and $t(e^\prime) \in  \varphi^{-1}(V(G_0))$.
Then,
\[ \tau_{\pi_*(x_n)}(\varphi(t(e^\prime))) \in ({\rm supp}\psi_n) ^{\rm c}. \]
}
\end{lemma}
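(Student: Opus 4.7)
The plan is to argue by contrapositive. Suppose that $\tau_{\pi_*(x_n)}(\varphi(t(e^\prime))) \in {\rm supp}\,\psi_n$. From the explicit formula ${\psi_n}(m,v_i) = e^{ik_0 \cdot m}\rho_n(m) (\xi_0)_i$ together with ${\rm supp}\,\rho_n \subset [-n+1,n-1]^d$ and $\pi_*(\tau_a(y)) = \pi_*(y) - a$, this forces $\pi_*(\varphi(t(e^\prime))) - \pi_*(x_n) \in [-n+1,n-1]^d$. In particular $\varphi(t(e^\prime)) \in I_n(x_n)$, and hypothesis ($\mathscr{P}$) gives $\varphi(t(e^\prime)) \in \Lambda$.

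Once $\varphi(t(e^\prime)) \in \Lambda$, Lemma \ref{lem_bulk} (specifically \eqref{int_edge}) yields $A_{t(e^\prime)}(G^\prime) = \varphi^{-1}(A_{\varphi(t(e^\prime))}(G))$. The reversed edge $\bar{e^\prime}$, characterized by $o(\bar{e^\prime})=t(e^\prime)$ and $t(\bar{e^\prime})=x^\prime$, lies in $A_{t(e^\prime)}(G^\prime)$, so there exists $e \in A_{\varphi(t(e^\prime))}(G)$ with $\bar{e^\prime} = \varphi^{-1}(e)$. Applying the orientation-preserving identity \eqref{vecon} then gives $x^\prime = t(\bar{e^\prime}) = \varphi^{-1}(t(e))$; in particular $x^\prime \in \varphi^{-1}(V(G_0))$, and it remains only to decide whether $t(e) \in \Lambda$.

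To close the argument, I would invoke the propagation length bound. Since $\chi(e) = \pi_*(t(e)) - \pi_*(o(e)) = \pi_*(t(e)) - \pi_*(\varphi(t(e^\prime)))$ and $|\chi_j(e)| \leq l_G$ for each component by the definition \eqref{proprang} of $l_G$, combining with the earlier inclusion $\pi_*(\varphi(t(e^\prime))) - \pi_*(x_n) \in [-n+1, n-1]^d$ yields
\[ \pi_*(t(e)) - \pi_*(x_n) \in [-n-l_G+1,\, n+l_G-1]^d. \]
Hence $t(e) \in I_n(x_n) \subset \Lambda$ by ($\mathscr{P}$), so $x^\prime = \varphi^{-1}(t(e)) \in \varphi^{-1}(\Lambda)$, contradicting the hypothesis $x^\prime \in \varphi^{-1}(\Lambda)^{\rm c}$.

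The only delicate point—and the reason the statement is nontrivial—is the transition from the perturbed graph $G^\prime$ back to the periodic graph $G$: Lemma \ref{lem_bulk} is exactly what permits lifting $\bar{e^\prime}$ to an edge $e$ of $G$, and the extra $l_G$-buffer in the definition of $I_n(x_n)$ in ($\mathscr{P}$) is tailored precisely to absorb the single-edge displacement from $\varphi(t(e^\prime))$ back to the origin $t(e)$ under the index $\chi(e)$. Everything else is bookkeeping on supports and projections.
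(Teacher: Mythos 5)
Your proof is correct and follows essentially the same route as the paper's: assume $\tau_{\pi_*(x_n)}(\varphi(t(e^\prime)))\in\mathrm{supp}\,\psi_n$, deduce $\varphi(t(e^\prime))\in I_n(x_n)\subset\Lambda$, lift the reversed edge to an edge $e$ of $G$ using the $\Lambda$-membership (Lemma \ref{lem_bulk}), and exploit the $l_G$-buffer in the definition of $I_n(x_n)$ to reach a contradiction. The only cosmetic difference is the direction of the final step: the paper concludes $|\chi_j(\varphi(e^\prime))|\geq l_G+1$, contradicting the definition of $l_G$, while you run the same inequality contrapositively to get $x^\prime\in\varphi^{-1}(\Lambda)$, contradicting the hypothesis.
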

\begin{proof}
{\rm
Suppose that $\tau_{\pi_*(x_n)}(\varphi(t(e^\prime))) \in {\rm supp}\psi_n$. 
Then, $\pi_*(\varphi(t(e^\prime))) - \pi_*(x_n) \in [-n+1,n-1]^d$.
Hence, we know that $\varphi(t(e^\prime)) \in I_n(x_n) \subset \Lambda$ 
and $t(e^\prime) \in \varphi^{-1}(\Lambda)$.
By the definition of $\Lambda$, we have
${\rm deg}_{G^\prime} t(e^\prime)
	= {\rm deg}_G \varphi(t(e^\prime))$ 
and $A_{\varphi(t(e^\prime))}(G) \subset A(G_0)$
with the result that $A_{t(e^\prime)}(G^\prime) \subset A(G^\prime_0)$.
Thus, the inverse edge $\bar{e}^\prime$ of $e^\prime$ 
is also contained in $A(G^\prime_0)$ and 
has the origin $o(\bar{e}^\prime) =t(e^\prime)$
and terminal $t(\bar{e}^\prime) =o(e^\prime)$,
which are contained in $V(G_0^\prime)$.
Hence, $\varphi(e^\prime) \in A(G_0)$ abd
$x^\prime = o(e^\prime) \in V(G_0^\prime) \setminus \varphi^{-1}(\Lambda)$.
In particular, $\varphi(x^\prime) \not\in I(x_n)$
and $ \pi_*(\varphi(x^\prime)) - \pi_*(x_n) \not\in [-n-l_G+1, n + l_G-1]^d$.
Consequently,
\begin{align*} 
\chi(\varphi(e^\prime))
	& = \pi_*(t(\varphi(e^\prime)) - \pi_*(o(\varphi(e^\prime)) \\
	& = \pi_*(\varphi(t(e^\prime)) - \pi_*(\varphi(x^\prime)) \\
	& = (\pi_*(\varphi(t(e^\prime))  - \pi_*(x_n))
		- ( \pi_*(\varphi(x^\prime)) - \pi_*(x_n)),
\end{align*}
which yields $|\chi_j(\varphi(e^\prime))| \geq l_G + 1$.
This is a contradiction, and therefore
$\tau_{\pi_*(x_n)}(\varphi(t(e^\prime))) \not\in {\rm supp}\psi_n$. 
}
\end{proof}

In what follows, we prove (ii) of Lemma \ref{lem3.5*}.
Because, by \eqref{Vecid0}, 
\[ \left(\mathcal{U}_0 L_G T_{\pi_*(x_n)} \psi_n \right) (x^\prime)
	= \begin{cases}
		\left(  L_G T_{\pi_*(x_n)} \psi_n \right) (x^\prime), 
			& x^\prime \in \varphi^{-1}(V(G_0)) \\
		0,
			& \mbox{otherwise},
		\end{cases} \]
it suffices to show that
\begin{equation}
\label{eq*001} 
\left( L_G T_{\pi_*(x_n)} \psi_n \right) (x^\prime) = 0 
\end{equation}
for all $x^\prime = \varphi^{-1}(x)$ with $x \in V(G_0) \cap \Lambda^{\rm c}$.
By the definition of $L_G$, 
\begin{align*}
\left( L_G T_{\pi_*(x_n)} \psi_n \right) (x^\prime)
& = \sum_{e \in A_x(G)} \left( T_{\pi_*(x_n)} \psi_n \right) (t(e)) \\
& = \sum_{e \in A_x(G)} \psi_n (\tau_{\pi_*(x_n)} (t(e))).
\end{align*}
To show \eqref{eq*001}, we need only to prove the following lemma.
\begin{lemma}
{\rm
Let $x \in V(G_0) \cap \Lambda^{\rm c}$ and $e \in A_x(G)$.
Then,
\[ \tau_{\pi_*(x_n)} (t(e)) \in  ({\rm supp}\psi_n) ^{\rm c}. \]
}
\end{lemma}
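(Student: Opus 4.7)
The plan is to argue by contraposition: I will show that if $\tau_{\pi_*(x_n)}(t(e))$ lies in ${\rm supp}\,\psi_n$, then $x$ would have to lie in $I_n(x_n)$, hence in $\Lambda$, contradicting $x \in \Lambda^{\rm c}$. The key inputs are the definition $\psi_n(m,v_i) = e^{ik_0\cdot m}\rho_n(m)(\xi_0)_i$, which gives ${\rm supp}\,\psi_n \subset [-n+1,n-1]^d \times \mathbb{N}_s$, the translation law $\tau_a(m,v_i)=(m-a,v_i)$, the definition of $I_n(x_n)$ using the thicker box $[-n-l_G+1,n+l_G-1]^d$, and the fact that for any $e\in A(G)$ the edge index satisfies $|\chi_j(e)|\le l_G$ by \eqref{proprang}.

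First I would unpack the assumption that $\tau_{\pi_*(x_n)}(t(e)) \in {\rm supp}\,\psi_n$. Since $\tau_{\pi_*(x_n)}(t(e))=(\pi_*(t(e))-\pi_*(x_n),\pi^*(t(e)))$, this is equivalent to $\pi_*(t(e))-\pi_*(x_n) \in [-n+1,n-1]^d$. Next I would use $o(e)=x$ together with $\pi_*(t(e)) = \pi_*(x) + \chi(e)$ to write
\[
\pi_*(x)-\pi_*(x_n) \;=\; \big(\pi_*(t(e))-\pi_*(x_n)\big) - \chi(e).
\]
Component-wise, the triangle inequality combined with $|\chi_j(e)|\le l_G$ then gives
\[
|\pi_*(x)_j-\pi_*(x_n)_j| \;\le\; (n-1) + l_G \;=\; n+l_G-1,
\]
for every $j=1,\ldots,d$, so that $x\in I_n(x_n)$.

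Finally I would invoke hypothesis $(\mathscr{P})$ which says $I_n(x_n)\subset \Lambda$, concluding $x\in\Lambda$. This contradicts $x\in V(G_0)\cap \Lambda^{\rm c}$, so the original supposition must fail and $\tau_{\pi_*(x_n)}(t(e)) \in ({\rm supp}\,\psi_n)^{\rm c}$, as required. The only delicate point — and the reason one defines $I_n(x_n)$ with the enlarged box of half-width $n+l_G-1$ rather than $n-1$ — is precisely this triangle-inequality step: traversing one edge can shift the $\mathbb{Z}^d$-coordinate by as much as $l_G$, so the unperturbed region $\Lambda$ must contain a collar of thickness $l_G$ around the support of $\psi_n$ to ensure that endpoints of edges emanating from $\Lambda^{\rm c}$ stay clear of ${\rm supp}\,\psi_n$. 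Once this geometric bookkeeping is carried out, the argument is essentially a one-line estimate, and it is the mirror image of the reasoning used in the companion Lemma for $L_{G'}\mathscr{U}_0 T_{\pi_*(x_n)}\psi_n$.
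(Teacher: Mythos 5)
Your proof is correct and takes essentially the same route as the paper: both start from $\tau_{\pi_*(x_n)}(t(e))\in\mathrm{supp}\,\psi_n$ giving $\pi_*(t(e))-\pi_*(x_n)\in[-n+1,n-1]^d$ and exploit the decomposition $\chi(e)=(\pi_*(t(e))-\pi_*(x_n))-(\pi_*(x)-\pi_*(x_n))$ together with the widths $n-1$ versus $n+l_G-1$. The only (immaterial) difference is that the paper assumes $x\notin I_n(x_n)$ and contradicts $|\chi_j(e)|\leq l_G$, whereas you use $|\chi_j(e)|\leq l_G$ to conclude $x\in I_n(x_n)\subset\Lambda$ and contradict $x\in\Lambda^{\rm c}$ --- the same estimate read in contrapositive.
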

\begin{proof}
{\rm
Let $\tau_{\pi_*(x_n)} (t(e)) \in  {\rm supp}\psi_n$.
Then,
\[ \pi_*(t(e)) - \pi_*(x_n) \in [-n+1, n-1]^d, \]
which implies $t(e) \in I_n(x_n) \subset \Lambda$. 
Since $x \not\in \Lambda$, $x \not\in I_n(x_n)$ and
\begin{align*}
\chi(e) & = \pi_*(t(e)) -\pi_*(o(e)) \\
	& = \pi_*(t(e)) - \pi_*(x) \\
	& = (\pi_*(t(e)) - \pi_*(x_n))
		+ (\pi_*(x_n) - \pi_*(x)).
\end{align*}
Because $\pi_*(t(e)) - \pi_*(x_n) \in [-n+1, n-1]^d$
and $\pi_*(x_n) - \pi_*(x) \not\in [-n+l_G;1, n+l_G-1]^d$,
$|\chi_j(e)| \geq l_G+1$. 
This is a contradiction,
and therefore $\tau_{\pi_*(x_n)} (t(e)) \not\in  {\rm supp}\psi_n$.
}
\end{proof}

\subsection{Proof of Lemma \ref{lem3.3}}
Let 
\[ f_l(n) = \sum_{m \in \mathbb{Z}} |\rho((m - l)/n) - \rho(m/n)|^2. \]
Because $\rho$ is an even function, 
\begin{align*} 
f_{-l}(n) 
	& = \sum_{m \in \mathbb{Z}} |\rho((m + l)/n) - \rho(m/n)|^2 \\
	& = \sum_{m \in \mathbb{Z}} |\rho((-m - l)/n) - \rho(-m/n)|^2 
		= f_l(n).
\end{align*}
Hence, we can assume that $l \in \mathbb{N}$ and $n>l$ without loss of generality.
By direct calculation, 
\[ f_l(n) = I_1 + I_2 + I_3, \]
where
\begin{align*} 
& I_1 := \sum_{m =-n, \ldots,n-l}
			|\rho((m+l)/n) -\rho(m/n)|^2 \\
& I_2 := \sum_{m =-n-l +1,\ldots,-n} |\rho((m+l)/n)|^2\\
& I_3 := \sum_{m =n-l,\ldots,n-1} |\rho(m/n)|^2.
\end{align*}
We first estimate $I_1$. By the triangle inequality, 
\begin{align*}
I_1\leq  & =  n^{-2} \sum_{m =-n, \ldots,n-l}
			||m+l| - |m||^2 \\
	& \leq (l/n)^2 \sum_{m=-n, \ldots,n-l}1
		= \left( 2-\frac{l+1}{n} \right) \frac{l^2}{n} = O(n^{-1}).
\end{align*}
Next we show that $I_2 = I_3 = O(n^{-2})$.
Because $\rho$ is even, 
the change of coordinates $m^\prime = m+l$ yields
\begin{align*}
I_2& = \sum_{m^\prime =-n +1,\ldots,-n+l} |\rho(m^\prime/n)|^2 \\
	& = I_3 	
		= n^{-2} \sum_{m =n-l,\ldots,n-1} |n - m|^2
		= n^{-2} \sum_{k= 1}^l k^2 =O(n^{-2}).
\end{align*}
This proves Lemma \ref{lem3.3}.

\section*{Acknowledgments}
The authors thank Professor Fumihiko Nakano
for providing useful comments. This work was supported by Grant-in-Aid for Young Scientists (B) (No. 26800054).


\end{document}